%% file: main_arxiv.tex
\documentclass[sigconf,natbib=false]{acmart}
\AtBeginDocument{%
  }

\setcopyright{acmlicensed}
\copyrightyear{2026}
\acmYear{2026}
\acmDOI{XXXXXXX.XXXXXXX}
\acmConference[CCS '26]{Conference on Computer and Communications Security}{November 15--19,
  2026}{The Hague, The Netherlands}
\acmISBN{978-1-4503-XXXX-X/18/06}

\RequirePackage[
  datamodel=acmdatamodel,
  style=acmnumeric,
  ]{biblatex}

\DeclareSourcemap{
    \maps[datatype=bibtex, overwrite]{
        \map{
            \step[fieldsource=url, match=\regexp{\x{5F}}, replace=\regexp{\\\x{5F}}]
        }
    }
}

\input{macros}

\usepackage{tabularx}

\usepackage{pgfplots}
\usepackage{framed}
\usepackage{tikz}
\usepackage{svg}
\usepackage{multirow}
\usepgfplotslibrary{fillbetween,groupplots}
\usepackage{mathtools}
\usepackage{subcaption}
\usetikzlibrary{calc}
\usetikzlibrary{patterns.meta}
\usetikzlibrary{decorations.text}
\usepackage[underline=true]{pgf-umlsd}

\begin{document}

\title{Time-Delayed Publicly Verifiable Quantum Computation for Classical Verifiers}

\author{Ameer Mohammed}
\authornotemark[1]
\email{ameer.mohammed@ku.edu.kw}
\affiliation{%
  \institution{Kuwait University}
  \city{Kuwait City}
  \state{}
  \country{Kuwait}
}
\author{Aydin Abadi}
\email{aydin.abadi@ncl.ac.uk}
\authornotemark[1]
\affiliation{%
  \institution{Newcastle University}
  \city{Newcastle upon Tyne}
  \state{}
  \country{United Kingdom}
}
\author{Jaffer Mahdi}
\email{jaffer.mahdi@ku.edu.kw}
\authornotemark[1]
\affiliation{%
  \institution{Kuwait Oil Company}
  \city{Kuwait City}
  \state{}
  \country{Kuwait}
}

\begin{abstract}

\input{sections/00_abstract}
\end{abstract}

\begin{CCSXML}
<ccs2012>
   <concept>
       <concept_id>10002978.10002979</concept_id>
       <concept_desc>Security and privacy~Cryptography</concept_desc>
       <concept_significance>500</concept_significance>
       </concept>
   <concept>
       <concept_id>10002978.10002991</concept_id>
       <concept_desc>Security and privacy~Security services</concept_desc>
       <concept_significance>300</concept_significance>
       </concept>
 </ccs2012>
\end{CCSXML}

\ccsdesc[500]{Security and privacy~Cryptography}
\ccsdesc[300]{Security and privacy~Security services}

\keywords{Verifiable Computation, Publicly Verifiable, Time-lock Puzzles, Post-Quantum Security}

\maketitle

\input{sections/01_intro}
\input{sections/02_related}
\input{sections/03_prelims}
\input{sections/04_protocol}

\input{sections/05_experimental}
\input{sections/06_conclusion}
\input{sections/08_acknowledgments}

\printbibliography

\appendix
\section{Open Science Appendix}
The code used containing the implementation of the protocol used to obtain our experimental results and instructions to run it can be found at~\cite{QuantumVerificationTLP_Code}.



\input{sections/Appndx_commitment_def}

\end{document}

%% file: macros.tex
\usepackage{amsfonts}

\usepackage{amsmath}

\usepackage{enumitem}

\usepackage{xspace}

\newcommand{\secp}{\mathrm{\lambda}}

\usepackage{graphicx}

\newcommand{\remove}[1]{}

\newcommand{\etal}{et~al.\ }

\newcommand{\comcom}{\ensuremath{\mathtt{Com}}\xspace}
\newcommand{\comver}{\ensuremath{\mathtt{Ver}}\xspace}

\newcommand{\angles}[1]{\langle #1 \rangle}

\newcommand{\ket}[1]{\lvert#1\rangle}

\renewcommand{\L}{\ensuremath L\:}

\newcommand{\N}{{\mathbb N}}

\newcommand{\cB}{{\mathcal B}}
\newcommand{\cC}{{\mathcal C}}

\newcommand{\cF}{{\mathcal F}}

\newcommand{\cH}{{\mathcal H}}

\usepackage{bm}

\newcommand{\poly}{\operatorname{poly}}

\newcommand{\negl}{\operatorname{negl}}

\newcommand{\class}[1]{\ensuremath{\mathbf{#1}}}

\newcommand{\BPP}{\class{BPP}}

\newcommand{\NP}{\ensuremath{\class{NP}}}

\renewcommand{\P}{\class{P}}

\newcommand{\BQP}{\class{BQP}}
\newcommand{\QMA}{\class{QMA}}

\newcommand{\QPIP}{\class{QPIP}}

\newtheorem{theorem}{Theorem}[section]

\newtheorem{claim}[theorem]{Claim}
\newtheorem{remark}[theorem]{Remark}

\newtheorem{proto}[theorem]{Protocol}
\newtheorem{protoc}[theorem]{Protocol}

\newcommand{\namedref}[2]{#1~\ref{#2}}

\newcommand{\torestate}[3]{%
\expandafter \def \csname BBRESTATE #2 \endcsname{#3}
\theoremstyle{plain}
\newtheorem{BBRESTATETHMNUM#2}[theorem]{#1}
\begin{BBRESTATETHMNUM#2}\label{#2}\csname BBRESTATE #2 \endcsname   \end{BBRESTATETHMNUM#2}
\newtheorem*{BBRESTATETHMNONNUM#2}{\namedref{#1}{#2}}
}

\newcommand{\restate}[1]{\begin{BBRESTATETHMNONNUM#1}[Restated] \csname BBRESTATE #1 \endcsname
\end{BBRESTATETHMNONNUM#1}}

\newcommand{\TLP}{\mathsf{TLP}}
\newcommand{\TLPSetup}{\mathsf{TLP.Setup}}
\newcommand{\TLPGen}{\mathsf{TLP.GenPuzzle}}
\newcommand{\TLPSolve}{\mathsf{TLP.Solve}}

\newcommand{\VC}{\mathsf{VC}}
\newcommand{\VCSetup}{\mathsf{VC.Setup}}
\newcommand{\VCProve}{\mathsf{VC.Prove}}
\newcommand{\VCReveal}{\mathsf{VC.Reveal}}
\newcommand{\VCVerify}{\mathsf{VC.Verify}}

\newcommand{\Commit}{\mathsf{Commit}}

\newcommand\compind{\stackrel{\mathclap{\normalfont\mbox{c}}}{\approx}}

\newcommand\statind{\stackrel{\mathclap{\normalfont\mbox{s}}}{\approx}}

%% file: sections/00_abstract.tex


%

Publicly verifiable delegation is a well-known problem involving a user who wishes to outsource a resource-intensive computational task to a more powerful but potentially untrusted server such that any other party is able to efficiently check the veracity of the computation's result. This problem has been extensively studied in the classical domain where the user and server are both non-quantum machines. However, the problem becomes more challenging when the classical user wants to delegate a quantum circuit to a \textit{single} prover with quantum-computing capabilities. Previous solutions have resorted to using impractical or non-standard cryptographic solutions (e.g. indistinguishability obfuscation) to achieve this requirement. In this work, we relax the requirement to have \textit{time-delayed} publicly verifiable proofs, where the verification key is made known to the public only when the computation (and its proof) are guaranteed to have been completed. We propose a practical non-interactive scheme leveraging commitment schemes and time-lock puzzles, which can be efficiently realized through well-established and standard post-quantum assumptions. The main idea of our technique lies in using time-lock puzzles to compile a 2-round privately verifiable scheme into a non-interactive publicly verifiable scheme with timestamped proofs, outsourcing not only the quantum computation but the puzzle solving as well. Security is proven in the quantum random oracle model with a common reference string (CRS).

%% file: sections/01_intro.tex
\section{Introduction}

An important problem when outsourcing a client's computation to a more capable server is ensuring that the result that is computed and returned by the server is correct and efficiently verifiable by the client. In particular, the verification complexity should be much lower than that required to run the actual computation. This problem has been well studied when the server and client are both non-quantum machines, and, in fact, can be realized through the use of succinct arguments for \NP~\cite{kilian1992, micali1994}. 
With quantum computing on the horizon and likely to be offered initially as a pay-per-use service by major companies, it is vital to prepare for a scenario where users with classical computers want to delegate complex computations to powerful quantum servers and efficiently verify the results produced by the servers-- a task now known in the literature as Classical Verification of Quantum Computation (CVQC).

In the breakthrough work of Mahadev \cite{mahadev2018}, it was demonstrated that such a scheme does indeed exist assuming the post-quantum hardness of Learning With Errors (LWE). This was further extended \cite{chia2020, bartusek2022} to add succinctness, in which the verifier's time complexity is $\poly(\log T)$ where $T$ is the time to execute the delegated quantum computation. In addition, while the original protocol consisted of 4 rounds of communication, this was reduced to 2 rounds or less (1 round with setup) while still preserving negligible soundness through parallel repetition \cite{chia2020,alagic2020,bartusek2022}. 

More relevant to this work, some applications demand \textit{public verifiability}, an additional property allowing anyone (not just the delegator) to verify the result of the computation. 

The usual desired setting of publicly verifiable non-interactive verifiable computation involves a prover who wishes to publish a single-message proof convincing any verifier that $C(x) = y$ for some program or circuit $C$ over input $x$. To avoid triviality in the fully classical setting, the proof needs to be relatively ``short'' such that verification is faster than re-executing $C(x)$ and checking the result, otherwise the verifier has no need to delegate and can instead perform the computation itself. The proof size requirement, while preferred, is not highly prioritized when the prover is quantum, as $C$ is not necessarily efficiently computable by the verifier. 
%
%
The standard go-to approach to solve this problem is by using succinct non-interactive arguments (SNARGs) for $\NP$ (or even $\P$), which can be realized from LWE \cite{choudhuri2022}. However, our delegated computation is quantum, and thus one needs a SNARG for $\BQP$ for this approach to work. The work of Bartusek and Malavolta \cite{bartusek2022a} provides one means to achieve this goal, and, in fact, they obtain SNARGs for $\QMA$, the quantum analogue of $\NP$, (and hence $\BQP$) which would imply publicly verifiable CVQC. Their results rely on quantum null-iO, which is derived from non-standard cryptographic primitives, e.g., VBB obfuscation. In contrast, our goal is to prioritize practicality over theoretical feasibility, and therefore, we wish to avoid such strong assumptions.

Ideally, we would like to base the scheme on \textit{falsifiable assumptions} ~\cite{naor2003,gentry2011b}, which are those that can be tested in a security game against an efficient challenger. Examples include one-way functions, trapdoor permutations, DDH, and LWE. Notably, indistinguishability obfuscation is \textit{not} falsifiable \cite{jain2022}. Roughly speaking, such assumptions are often preferred, as they can be efficiently proven/refuted and therefore instill more confidence when used as building blocks for more advanced schemes. The very recent related work of Bartusek et al. \cite{bartusek2026} comes close to achieving this by showing a publicly verifiable CVQC that is based on LWE but they do so under the classical oracle model \cite{bartusek2023a}, which idealizes the notion of obfuscation for classical circuits and can be heuristically instantiated with iO.








\subsection{Our Result}

We seek to make progress toward addressing the following question: 
\begin{center}
\textit{Does a publicly verifiable delegation protocol for quantum computation with a classical verifier exist under \textbf{falsifiable} assumptions in the random oracle model?}
\end{center}
We answer the question affirmatively with a minor adjustment to the standard publicly verifiable computation setting. 
Unlike standard publicly verifiable computation, where the verification key is available from the outset, our scheme delays the release of this capability. Specifically, public verifiability is enabled only after a time delay $\Delta$, where $\Delta > O(T)$, and only for proofs generated before time $\Delta$. Until that point in time, the scheme operates in a privately verifiable mode. 

\begin{theorem}[informal] Under the post-quantum LWE assumption, there exists a non-interactive time-delayed publicly verifiable CVQC protocol for problems in \textbf{BQP} with computational soundness in the common reference string (CRS) and quantum random oracle model (QROM).
\end{theorem}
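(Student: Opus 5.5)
The plan is to compile an existing 2-round privately verifiable CVQC protocol into a non-interactive, time-delayed publicly verifiable one. For the base protocol I would take the parallel-repeated Mahadev-style protocol of \cite{alagic2020,bartusek2022}, which is 1-round with setup under post-quantum LWE in the QROM. In it the verifier samples trapdoor claw-free keys together with secret verification data $\mathsf{sk}$ (the trapdoors and the random basis choices), and the prover answers with a single message $\pi$. Our compiled scheme works as follows. The delegator publishes the public keys as part of the CRS. Alongside them it publishes a commitment $\Commit(\mathsf{sk})$ and a time-lock puzzle $Z \leftarrow \TLPGen(\mathsf{sk}, \Delta)$, with $\Delta$ chosen larger than the time $O(T)$ needed for honest proving. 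The prover computes $\pi$, commits to it (or timestamps it), and then solves $Z$ to recover $\mathsf{sk}$ together with the commitment opening. It publishes everything. Any verifier checks that $\pi$ was timestamped before $\Delta$, checks the opening of $\Commit(\mathsf{sk})$, and runs the private verifier with $\mathsf{sk}$.

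The proof has three steps. \textbf{Completeness} follows from completeness of the base protocol, correctness of the TLP, and correctness of the commitment. \textbf{Public verifiability} follows from binding of the commitment: nobody can open $\Commit(\mathsf{sk})$ to a different key $\mathsf{sk}'$ that would accept a bad proof. I would use a QROM hash-based commitment, with binding from collision resistance of the random oracle. \textbf{Soundness} is handled by a hybrid argument. Hybrid 0 is the real game. In Hybrid 1 the puzzle $Z$ and the commitment are replaced by encodings of $0$. Any cheating prover $P^*$ that produces a proof timestamped before time $\Delta$ must have produced $\pi$ using depth less than the puzzle's hardness parameter. Its behaviour up to that point is therefore indistinguishable between Hybrid 0 and Hybrid 1, by post-quantum TLP security (instantiated from LWE-based or repeated-squaring/hash-chain style puzzles in the QROM) and by hiding of the commitment. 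In Hybrid 1, $P^*$ is exactly a cheating prover against the base privately verifiable protocol, so its acceptance probability is negligible.

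I expect the soundness reduction to be the main obstacle. The event we need to preserve, namely that $\pi$ is accepted, is only checkable using $\mathsf{sk}$, and $\mathsf{sk}$ is exactly what the TLP hides. I would resolve this by having the distinguisher sample $\mathsf{sk}$ itself and embed only the puzzle challenge. It runs $P^*$ until the timestamp deadline, which is a depth-bounded computation, and only afterwards uses its own knowledge of $\mathsf{sk}$ to evaluate acceptance. This verification step happens outside the depth-restricted phase, so it is permitted by the TLP security definition, provided that definition is stated for adversaries whose depth is bounded only before they output. A further care point is rewinding: extraction in the QROM, for example to argue that $\pi$ was fixed by the timestamp commitment, must avoid measuring the prover's state. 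I would use a straight-line binding argument, such as Zhandry's compressed oracle, rather than rewinding. Finally, the polynomial gap between $\Delta$ and the honest running time $T$ has to be set consistently with the puzzle's sequentiality parameter.
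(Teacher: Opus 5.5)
Your proposal follows essentially the same route as the paper. It uses the same compiler: the base CVQC key and the commitment opening are locked in a post-quantum TLP, and the proof is timestamped. It also uses the same chain of arguments: binding forces the revealed key to equal $sk$; TLP security and then commitment hiding erase $sk$ from the CRS, while the reduction keeps its own copy of $sk$ to evaluate acceptance; and soundness of the underlying 2-round protocol finishes the argument.

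Two small corrections. First, RSA-style repeated squaring is broken by Shor's algorithm, so only the lattice-based or QROM hash-chain puzzles are admissible. Second, the cost of running $V_{out}$ inside the TLP distinguisher should be absorbed by leaving slack between the adversary's depth and $\Delta$, since verification has depth $\poly(\secp,|C|)$ independent of $\Delta$. Weakening the TLP definition so that the post-output phase is unrestricted does not work, because such a definition is unsatisfiable: that phase could simply solve the puzzle.
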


A scheme in the common
reference string (CRS) model, as opposed to the plain model, would allow all parties to have access to a string that was honestly generated by some trusted setup algorithm. We note that the CRS is not uniformly random but is deliberately structured,
so we cannot simply replace the need for a CRS with an output of the random oracle. 


Our technique of employing time-lock puzzles to render a CVQC protocol publicly verifiable is also of independent interest. To the best of our knowledge, this is the first instance in which time-lock puzzles are used to transform a designated-verifier verifiable quantum computation scheme into a publicly verifiable one. 


We expect that this technique can be generalized to compile any privately verifiable computation scheme into a time-delayed, publicly verifiable one, even in the classical (non-quantum) setting. However, since standard publicly verifiable delegation schemes based on falsifiable assumptions already exist in the classical setting~\cite{kalai2019,ghosal2023}, this generalization is more impactful and meaningful in the classical-verifier quantum-prover setting. 
We expect that the time-delayed publicly verifiable scheme offers greater efficiency in scenarios where lightweight alternatives to SNARGs are preferred.

We implemented an instantiation of our scheme for two computational tasks on AWS Braket, which is a cloud-based quantum simulation platform. We tested our scheme on random quantum circuits and Harrow–Hassidim–Lloyd (HHL) circuits \cite{harrow2009,yalovetzky2024}, and estimated their concrete running time. The latter was selected as one potential candidate circuit that a client may delegate to take advantage of the quantum speedup and solve linear systems exponentially faster than classical algorithms. In assessing concrete running times for a $16 \times 16$ HHL instance, we found that circuit computation by the quantum prover took 90 seconds (which is also the time it should take at least to solve the corresponding puzzle) while puzzle generation on the classical client side required only~0.05\,ms.

\input{sections/figure-workflow}
\subsection{Motivation and Applications}

Quantum computers promise significant speedups over classical machines for a wide range of tasks. However, due to their high cost and limited availability, they are expected to be initially accessed via cloud-based, pay-per-use services. In such a model, clients (often classical and resource-limited) need to outsource their quantum computations to powerful remote quantum servers. This delegation model introduces the critical challenge of trust:

\begin{center}
\textit{How can a classical client be sure that the result provided by a remote quantum server is correct? }
\end{center}

Verifiable delegation schemes aim to solve this by enabling the client to efficiently verify the correctness of the quantum result without redoing the computation  themselves.  
In many real-world scenarios, it is not enough for only the original client to verify the result of a delegated quantum computation; instead, the result should be publicly verifiable. For instance, public verifiability becomes essential when (a) the client is offline or resource-constrained, or (b) dispute resolution is needed and third parties (e.g., courts, auditors, or regulatory bodies) must confirm correctness. 
In particular, in commercial settings such as quantum cloud services, public verifiability strengthens accountability and helps prevent spurious disputes: customers and auditors can independently validate reported outputs, and providers are protected against false accusations of misbehavior. More broadly, these deployment requirements (offline or resource-constrained clients, dispute resolution, and external auditability) motivate publicly verifiable CVQC. At the same time, achieving public verifiability is nontrivial because existing CVQC protocols are typically designated-verifier and rely on secret verification keys; Section~\ref{sec::Naive-Approaches} discusses why naive ``publish-the-key'' strategies fail. Publicly verifiable CVQC is therefore particularly relevant in high-stakes domains where outputs affect consequential decisions and post-hoc auditability is required, for example:

\begin{itemize}[leftmargin=4mm]
        \item Drug Discovery: Quantum algorithms like Quantum Approximate Optimization Algorithm  and Variational Quantum Eigensolver  can simulate complex molecular interactions and predict protein folding patterns more efficiently than classical methods. This not only accelerates the drug discovery process but also deepens our understanding of protein misfolding diseases, including Alzheimer's and Parkinson's~\cite{DurantKNDLWYOS24,kandula2023quantum}.

        \item Genomics  and Personalized Medicine:  Quantum computing has the potential to transform genomics and personalized medicine by facilitating the analysis of complex genetic interactions on a scale beyond the reach of classical computers~\cite{chow2024quantum,RasoolARQQA23}.

    
    \item Portfolio Optimization and Risk Modeling: Quantum algorithms can solve complex optimization problems inherent in portfolio management, enabling better asset allocation and risk assessment \cite{yalovetzky2024solving}. For instance, the Harrow-Hassidim-Lloyd  algorithm can be employed for such purposes~\cite{harrow2009quantum}.
    
    

 \item Quantum-Enhanced Classification: Quantum machine learning models such as Quantum Support Vector Machines and Variational Quantum Classifiers can substantially improve data classification~\cite{HavlicekCTHKCG19}. These models have shown promise in fields like fraud detection and medical diagnosis by offering potential computational advantages over classical classifiers~\cite{abs-2308-05237,maouaki2024quantum}.


    \end{itemize}

\subsection{Naive Approaches}\label{sec::Naive-Approaches}

One might wonder why the client cannot use a privately verifiable scheme and publish the secret key in the clear alongside the circuit to enable third-party verification. Publishing the key undermines security: once it becomes public, a malicious prover can forge proofs that pass verification.
Even if the client privately shares the key with a select set of verifiers, verification remains non-public, because only those parties can check the result. Worse yet, if any verifier leaks the secret to the quantum server, the server could generate convincing but invalid proofs, completely breaking soundness.

Alternatively, if the client were allowed to send a second message, it could keep the secret verification key private until the server returns the result and proof, and then publish the key so that anyone can verify. In our setting, this ``publish-later'' strategy is not available: it requires either (i) continued client availability to issue a second release message, or (ii) a trusted escrow that stores the key and releases it at the appropriate time. We assume neither, the client may go offline immediately after its first-round message, and we do not rely on any trusted party for timed key release. Consequently, delayed public verification must embed the key material in the client's initial message in a way that prevents recovery before the prescribed time and lets any party later recover the same key fixed at setup and check its consistency with the public parameters.





\subsection{Technical Overview}


Before presenting the high-level technical ideas behind our approach, we first highlight the core challenges by examining several strawman solutions which may at first glance appear viable but ultimately fail to address our goals.

\subsubsection{Initial Attempts.} 
As previously mentioned, SNARGs for $\BQP$ are the best candidate for solving this problem. However, their construction currently relies on very strong assumptions~\cite{bartusek2022a}. One might consider adopting an approach similar to \cite{chia2020} (in the CRS model) where a fully homomorphic encryption (FHE) of the private verification key is sent by the client to third parties where they will, in turn, run a (publicly executable) FHE evaluation of the verification procedure over the ciphertext to output an encrypted result of the verification. However, the third-party verifier will not be able to decrypt the prover's homomorphically evaluated result, as they do not possess the FHE secret key. 

To avoid dealing with encrypted results, a somewhat related attempt is to use functional or even attribute-based encryption (ABE) as proposed in \cite{parno2012} where a delegated Boolean circuit $C$ is represented as a functional secret key $sk_C$, the input $x$ to the circuit is represented as an attribute on a ciphertext $c_x$ of a randomly encrypted message, and computing the result consists of decrypting $c_x$ using $sk_C$ to get $C(x)$. That said, because $C$ is a quantum circuit, we need ABE for circuits $C$ in $\BQP$, which again is only known to be achievable from iO \cite{bartusek2022a}. Alternatively, we can try to let the functional secret key be $sk_{V_{vk}(.)}$ where $V$ is the \textit{classical} verifier circuit with the private verification key hardcoded. For this, we need some form of public-key function-private ABE, which is known to be restricted to a limited class of functions, e.g., point functions or inner products~\cite{brakerski2018a}.

To circumvent the apparent lack of sufficiently expressive primitives from standard assumptions, we elected to relax the public verifiability requirement to be active only after the computation and its proof are released. This intuition leads us towards using time-based cryptography as a means to delay verifiability for the benefit of basing our scheme on sturdier assumptions.

\subsubsection{Our Approach.} 

At a high level, the core idea behind our protocol is to transform an efficient, privately verifiable CVQC scheme into a publicly verifiable one by encapsulating the private verification key within a time-lock puzzle, a cryptographic time capsule. The verification key is released only after the delegated computation is completed.  The steps are illustrated in Figure~\ref{fig:overview}, which we describe in more detail here. 
%
%

The privately verifiable protocol that we start with relies on the measurement protocol of Mahadev \cite{mahadev2018} which generates a secret trapdoor (of a trapdoor injective claw-free function, TCF) for use during the verification. We require the client, during the setup, to encode the secret trapdoor used for the verification in a time-lock puzzle and publish it as part of the public parameters of the scheme. 

%
The client publishes the public parameters along with the puzzle. The server performs the delegated computation and publishes the result along with a proof asserting the computation has been performed correctly. Later, the server finds the solution to the puzzle and publishes the solution too. The puzzle's difficulty is set such that the server cannot find its solution before performing the computation and generating the proof. Thus, if $C(x)$ takes time $T$ to complete and prove, we set the puzzle to be solved in time $\Delta > \Omega(T)$. Given the computation result, proof, and the trapdoor, any verifier can check the solution's validity.  Since the server has already computed the result and published it, the knowledge of the trapdoor that it finds later will not help it to cheat and create a proof for a false statement. 

\begin{figure}[t]
    \centering
        \centering
        \includegraphics[scale=0.35]{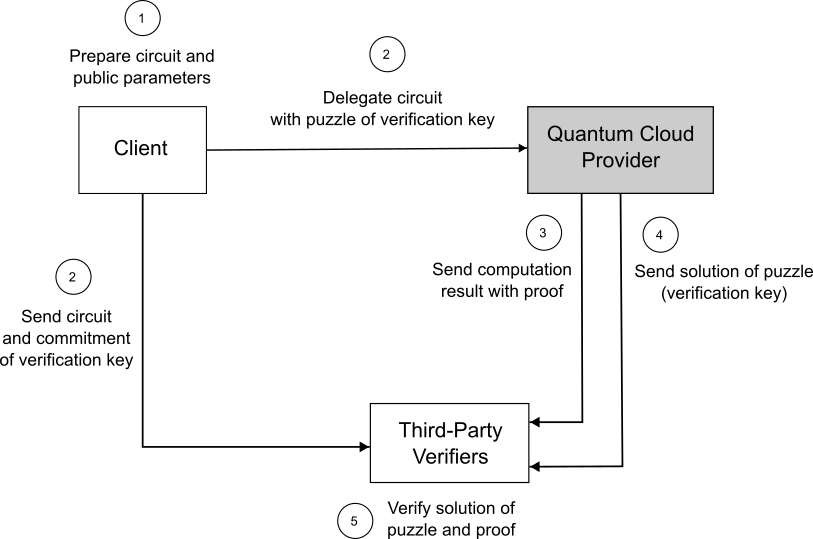}
    %

    \caption{A simplified diagram illustrating the high-level structure of the post-quantum PV-CVQC protocol using time-lock puzzles and commitments.}
    \label{fig:overview}
\end{figure}

Furthermore, the server cannot wait until after solving the puzzle and then publish a forged proof, because the proof must carry a verifiable timestamp $\tau$, and verification rejects unless $\tau$ is smaller than the deadline. The verifiers do not need to monitor the server; they only check the timestamp embedded in the proof against the public deadline. 

However, we still need to make sure the verifiers are given the original trapdoor, otherwise the server might be able to provide an alternative trapdoor for an invalid computation result but still convince the verifiers to accept the proof. 
To satisfy this requirement, we use the following efficient technique, introduced in~\cite{AbadiK21}. When the client generates the time-lock puzzle for the trapdoor, it generates a commitment to this trapdoor and publishes the commitment as part of the public parameters. During the generation of the time-lock puzzle, the client concatenates the plaintext trapdoor with the fixed-length random value used for the commitment and generates the puzzle for this plaintext, i.e., the concatenation. In this setting, when the server solves the puzzle, it parses the solution into two parts: the trapdoor and the randomness (which are the opening of the commitment) and publishes them. Using the opening, a verifier initially checks whether they match the public commitment and if the verification passes, then the verifier uses the trapdoor to check the correctness of the computation's result. 
To ensure that the time-lock puzzle cannot be solved by the quantum prover faster than intended, it must be based on a post-quantum hardness assumption. Since TCF was shown to be realized under the LWE or Ring-LWE assumptions \cite{brakerski2018, brakerski2020a}, which is secure against poly-time quantum adversaries, we can rely on the same assumptions to realize a time-lock puzzle primitive as recently shown in~\cite{lai2023, agrawalr2024}. 

\noindent To summarize, informally, the protocol steps are as follows:
\begin{enumerate}
    \item The client, who wants to delegate a circuit $C$ on input $x$, generates the public $pk$ and private $sk$ parameters of the underlying privately verifiable CVQC scheme.
    \item It publishes a time-lock puzzle $o$ and a commitment $d$ of the private trapdoor alongside the circuit to be delegated.
    \item The server runs the CVQC prover's algorithm on the delegated computation $C(x)$. It publishes the computation output along with the associated proof of correctness. Concurrently, the server may start solving the puzzle $o$.
    \item Once the puzzle solution, representing the now-public trapdoor, is recovered, the server publishes it to all verifiers.
    \item The verifier, now having access to the CVQC trapdoor verification key, can check the validity of the puzzle solution against the commitment $d$, and if it passes, initiates the CVQC verification algorithm to either accept or reject the proof.
\end{enumerate}
We elaborate on each of these steps and the implementation of each algorithm more formally in Section~\ref{sec:pvCVQC}.

%% file: sections/02_related.tex
\section{Related Work}
\label{sec:relatedWork}

This section covers previous work in verifiable quantum computation and public delegation as well as techniques from timed cryptography that are adjacent to our work. Although there have been several works addressing the verifiable quantum computation problem, we will focus mainly on the setting of a single classical verifier and single quantum prover, the most notable of which are listed in Table~\ref{table:relatedwork}.

\input{sections/comparison-table}

\subsection{Verification of Quantum Computation}
Approaches used for quantum computation verification can be categorised according to the model of computation that parties are assumed to possess within a particular setting and whether certain computational assumptions are made. If the verifier is imbued with some lightweight quantum capabilities (e.g., measurement), then various solutions exist in this setting \cite{barz2013,fitzsimons2017,morimae2016,morimae2018,takeuchi2018,kashefi2024}. There are also methods in which the verifier is kept classical, but the protocol requires two or more entangled provers \cite{ji2016,grilo2019,coladangelo2019}. While these protocols can be shown to have information-theoretic security, having multiple entangled and non-communicating provers is often practically prohibitive.

Borrowing from ideas in computational security, a recent promising line of work, initiated by Mahadev \cite{mahadev2018}, deviates from the above approaches by making additional computational assumptions to obtain verifiable delegation in a setting where the verifier is an efficient classical machine interacting with a single (computationally bounded) prover. This was the first such interactive protocol in this setting, verifying the result of a quantum computation using four rounds of communication. This protocol was proven to be sound under the learning with errors (LWE) assumption \cite{regev2005}, which is assumed to be intractable for polynomial-time quantum machines. Subsequent works \cite{chia2020, alagic2020} reduced the number of rounds to 2 (or one round with setup) in the quantum random oracle model, and in fact, we use these as a basis for our protocol.


Gheorghiu and Vidick \cite{gheorghiu2019} propose a composable protocol that allows a verifier to  delegate to a prover the preparation of certain single-qubit quantum states. The composability feature of the protocol allows it to be securely combined with other protocols. The protocol, in a preprocessing phase, requires several rounds of communication that depend on the circuit size. It also relies on LWE. Chung et al. \cite{ChungLLW22} propose a generic compiler that transforms any CVQC protocol to a blind one while preserving the original protocol's round complexity. It relies on quantum fully homomorphic encryption schemes and quantum-secure LWE.   

Delegation protocols can also have the property of blindness where the prover is guaranteed to learn nothing about the circuit being delegated (apart from its size). However, impossibility results have been established for \textit{information-theoretically} secure blind delegation of quantum computation to classical verifiers~\cite{aaronson2019}.

All of the aforementioned protocols are in the designated verifier model where the verifier needs to possess some secret information for verification; if the prover gains access to this secret then it can convince a verifier about an invalid proof. As a result, they are not publicly verifiable. 




\subsection{Publicly Verifiable Delegation}

Publicly verifiable delegation has long been studied for the setting where the verifier and prover are both classical machines. This is reflected in the long line of work that proposes various schemes under standard \cite{kalai2019} and non-standard or idealized (e.g., Random Oracle Model) assumptions \cite{bitansky2013, bitansky2017}. Commonly used methods for achieving publicly verifiable proof systems involve the application of the Fiat-Shamir heuristic \cite{fiat1987} to convert a public-coin interactive proof system (e.g. sigma protocols) into a non-interactive one \cite{lindell2015,canetti2019,bunz2018} or rely on trusted setup assumptions \cite{groth2010a,bitansky2013b,peikert2019} such as common reference strings (CRS). Such techniques generally give rise to non-interactive zero-knowledge proofs (NIZKs) and succinct non-interactive arguments of knowledge (SNARKs) with the public verifiability property. 

In the quantum setting, several constructions of NIZKs for QMA have been proposed under different settings and/or assumptions. The works of \cite{coladangelo2020,morimae2022} construct NIZK for QMA with quantum pre-processing in the designated-verifier model where the proof is classically verifiable but the verifier needs to generate a quantum message as its first message. In addition, several constructions were proposed under the malicious designated verifier (MDV) model \cite{shmueli2021,bartusek2021a} where the trusted setup generates a \textit{reusable} common random string but any verifier, given the CRS, needs to generate its own classical public-secret key pair where the secret key is used for its own private (designated) verification. All of the above protocols require some form of secret parameters for verification and are therefore not publicly verifiable.   


Previous work on publicly verifiable quantum computation relied on the verifier having some form of quantum capability \cite{honda2016, liu2023}. In the CVQC setting, a recent line of work~\cite{bartusek2022a, bartusek2023a, bartusek2026} focuses on achieving public verifiability based on (post-quantum) indistinguishability obfuscation (iO) for classical circuits or under the classical oracle model, which idealizes obfuscation for classical circuits. Our work aims to eliminate the (strong) iO requirement. While follow-ups to this work \cite{metger2024, gunn2025} did achieve succinct arguments for \QMA by forgoing iO and relying only on (post-quantum) LWE only, their schemes are interactive with constant round complexity, while we wish to keep the communication non-interactive.

\subsection{Time-Based Cryptography}

\subsubsection{Time-Lock Puzzle.} 

Timothy May \cite{TimothyMay1993} initially put forth the idea of sending information to the future. A basic property of a time-lock scheme is that generating a puzzle takes less time than solving it. 
The scheme that Timothy May proposed used a trusted agent that releases a secret on time for a puzzle to be solved. Since relying on a trusted agent can be a strong assumption, Rivest \textit{et al.} \cite{Rivest:1996:TPT:888615} proposed an RSA-based TLP. This scheme does not require a trusted agent, relies on sequential squaring, and is secure against a receiver who may have many computational resources that run in parallel with the goal of finding the solution faster.   
Since the introduction of the RSA-based TLP, different variants of it have been proposed, including post-quantum  TLPs \cite{lai2023,agrawalr2024} and homomorphic TLPs, e.g., in  \cite{MalavoltaT19}. 
Malavolta and Thyagarajan et al.  \cite{MalavoltaT19} proposed the notion of homomorphic TLPs, which allows an arbitrary function to run over puzzles of different clients before they are solved. The schemes use the RSA-based TLP  and fully homomorphic encryption. 
Furthermore, to achieve efficiency, partially homomorphic TLPs have also been proposed, including those that support homomorphic linear combinations  of puzzles \cite{MalavoltaT19,liu2022towards}.


\subsubsection{Verifiable Delay Function (VDF)} A VDF enables a prover to provide a publicly verifiable proof stating that it has performed a pre-determined number of sequential computations \cite{BonehBBF18,Wesolowski19,BonehBF18,Pietrzak19a}. VDF was first formalized by Boneh \textit{et al}  \cite{BonehBBF18}. They proposed several VDF constructions based on SNARKs along with either  incrementally verifiable computation or injective polynomials, or based on time-lock puzzles, where the SNARKs-based approaches require a trusted setup.  Later,  Wesolowski \cite{Wesolowski19} and Pietrzak \cite{Pietrzak19a} concurrently improved the previous VDFs  from different perspectives and proposed schemes based on sequential squaring. They also support efficient verification. Most VDFs have been built upon TLPs. Nevertheless, the converse is not necessarily the case because VDFs are not suitable to encapsulate an arbitrary private message since they take a public message as input, whereas TLPs have been designed to conceal a private input message. 
%


\subsubsection{Timed Commitment.} A timed commitment is a commitment scheme that includes an optional forced opening phase, enabling the receiver to recover the committed value without the committer’s cooperation—provided the receiver invests sufficient computational effort and time~\cite{BonehN00,abusalah2024,katz2020}. This property is particularly useful in scenarios where the sender refuses to voluntarily open the commitment. In our solution, a timed commitment can be employed as a black-box component to unify the time-lock puzzle and commitment schemes, albeit at the expense of efficiency. This is because, in our scheme, the client is assumed to be honest and therefore does not need to prove (e.g., via zero-knowledge proofs or related techniques) that the puzzles have been generated correctly, an assurance that is typically required in timed commitment schemes. Therefore, we adopt an efficient technique proposed in~\cite{AbadiK21}, which combines a time-lock puzzle with a commitment scheme, enabling the server to efficiently prove that it has recovered the correct solution originally embedded by the client within the puzzle.

\subsubsection{Timed Signatures and Proofs.} Researchers have proposed the timed version of digital signatures \cite{AbadiCKZ20}. They formally defined such a notion in the universally composable (UC) framework \cite{Canetti01} and proposed a protocol to realize this concept by mainly relying on a blockchain and standard digital signature scheme. Arun et al. \cite{arun2022} proposed the notion of short-lived zero-knowledge proofs and signatures, which consider the case that after a certain time period the proofs or signatures will be invalid/forgeable. The main idea behind the aforementioned notion is to allow any party to forge any proof by executing
a large sequential computation; the proposed schemes primarily rely on VDFs. This differs from our setting where instead of eventual deniability, we are concerned with eventual public verifiability. To enable a smart contract to efficiently verify a proof in a  ``proofs of data retrievability''  system, researchers combined a message authentication code (MAC) with an RSA-based TLP. The MAC verification key is embedded in the TLP and provided to the smart contract after the proof is registered~\cite{AbadiK21}.


%% file: sections/comparison-table.tex
\begin{table*}[!t]
\centering
\small
\scalebox{1}{
\begin{tabularx}{\textwidth}{ 
  | >{\centering\arraybackslash}p{4.3cm}
  | >{\centering\arraybackslash}p{1.7cm} 
  | >{\centering\arraybackslash}p{1.5cm}
  | >{\centering\arraybackslash}X 
  | >{\centering\arraybackslash}X 
  | >{\centering\arraybackslash}p{2cm}
  | >{\centering\arraybackslash}p{3cm}
  | }
 \hline
 {\textbf{Protocol}} 
 & {\textbf{Round Complexity}}
 & \textbf{Succinct}  
 & \textbf{Blind} 
 & \textbf{Publicly Verifiable} 
 & \textbf{Assumptions} 
 & \textbf{Model} 
 \\ \hline
  Mahadev \cite{mahadev2018} & 4 & No & No & No & LWE & Plain \\ \hline
  Gheorghiu and Vidick \cite{gheorghiu2019} & $O(|C|)$  & No & Yes & No & LWE & Plain  \\ \hline
  Chia et al. \cite{chia2020} & 2 & Yes & No & No & LWE + iO & QROM+CRS  \\ \hline
  Alagic et al. \cite{alagic2020} & 1 (w/setup) & No & No & No & LWE & QROM \\ \hline  
  Zhang \cite{zhang2022} & $O(\secp)$ & Yes ($O(|C|)$) & No & No & LWE & QROM \\ \hline
  Bartusek and Malavolta \cite{bartusek2022a} & 1 & Yes & No & Yes & LWE + iO & QROM + CRS \\ \hline
  Metger et al. \cite{metger2024} and Gunn et al. \cite{gunn2025} & $O(1)$ & Yes & No & No & LWE & Plain \\ \hline
  Bartusek et al. \cite{bartusek2023a,bartusek2026} & 1 & Yes & Yes & Yes & LWE & Classical Oracle Model \\ \hline
 This work & 1 (w/setup) & No & No & Yes & SIS/LWE variants & QROM + CRS \\ \hline 
\end{tabularx}
}
\
\caption{The existing CVQC schemes in the single verifier, single prover setting and their properties. Succinctness refers to the property that verification time is $\poly(\secp, \log d(C))$ where $\secp$ is the security parameter and $d(C)$ is the depth of the delegated circuit $C$. Blindness refers to keeping the computation hidden from the prover. All computational assumptions are post-quantum.
}
\label{table:relatedwork}
\vspace{-3mm}
\end{table*}

%% file: sections/03_prelims.tex
\section{Preliminaries}
\label{sec:preliminaries}

In this section, we present the background and necessary material for the models, primitives, and protocols used to build our scheme.

\subsection{Notation} 

We say that a function $f : \N \rightarrow [0,1]$ is negligible if, for any polynomial $p(.)$ and sufficiently large $n \in \N$, we have $f(n) < 1/p(n)$. We denote $\cF(X,Y)$ to be the set of all functions with domain $X$ and range $Y$. Given two interactive algorithms $P$ and $V$, we denote $y \xleftarrow{\$} \angles{P(w),V(z)}(x)$ as the random variable representing the interaction between $P$ (with private input $w$) and $V$ (with private input $z$) on common input $x$ where $y$ is $V$'s local output. 

We say that a classical algorithm $A(x)$ is efficient or probabilistic polynomial time (PPT) if it runs in time $\poly(|x|)$. Similarly, a quantum algorithm $Q(\ket{x})$ is said to be a quantum polynomial time (QPT) machine if it runs in time $\poly(|x|)$ where $\ket{x}$ is a state in some Hilbert space $\cH$. For any given circuit $C$, we denote $|C|$ to be the size of the circuit and $depth(C)$ to be its depth. We denote $\BQP$ to be the set of languages decidable by a QPT machine and $\QMA$ to be the set of all languages $L$ where $L \in \QMA$ if, for every $x \in L$, there exists a polynomial-sized quantum state (a witness) that makes a QPT machine accept $x$ with high probability (and rejects otherwise). For any two distributions $X$ and $Y$, we denote $X \compind Y$ whenever $X$ and $Y$ are computationally indistinguishable and $X \statind Y$ to mean that $X$ and $Y$ are statistically close. Throughout our work, we refer to the Quantum Random Oracle Model~\cite{boneh2010} (QROM) as an idealized model that allows adversaries to have black-box access to a quantum analogue of the classical random oracle that can input/output quantum states.

\subsection{Time-Lock Puzzles}\label{sec::Time-lock-Encryption} 

In this section, we restate the  definition of a time-lock puzzle (TLP)~\cite{Rivest:1996:TPT:888615}. 

\begin{definition}
\label{Def::Time-lock-Puzzle} 
A TLP scheme consists of three algorithms: $\TLP = (\TLPSetup,$ $ \TLPGen, \TLPSolve)$ defined as follows:
\begin{itemize}[leftmargin=4.2mm]
\item $\TLPSetup(1^{\secp},\Delta)\rightarrow (tpk,tsk)$. A probabilistic algorithm that takes as input a security parameter, $1^{\secp}$, and time parameter $\Delta$ that specifies how long a message must remain hidden in seconds. 
It outputs a pair $(tpk,tsk)$ that contains the scheme's public and private parameters, respectively.

\item $\TLPGen(m, tpk, tsk)\rightarrow {o}$. A probabilistic algorithm that takes as input a solution $m$ and $(tpk,tsk)$. It outputs a puzzle $o$.

\item $\TLPSolve(tpk, {o})\rightarrow s$. A deterministic algorithm that takes as input  $tpk$ and $ {o}$. It outputs a solution $s$.

\end{itemize}

The following properties must also be satisfied:
\begin{itemize}
\item Completeness. For any message $m$ it always holds that: $$\TLPSolve(tpk, \TLPGen(m,tpk,tsk))=m$$

\item Efficiency. The run-time of  $\TLPSolve(pk, {o})$ is upper-bounded by $poly(\Delta,\secp)$.
\end{itemize}

\end{definition}

In certain schemes, such as \cite{agrawalr2024}, the secret key $tsk$ may not be needed. The security of a TLP requires that the puzzle's solution remains confidential against all adversaries running in parallel within the time period, $\Delta$. It also requires that an adversary cannot extract a solution in time $\delta(\Delta)<\Delta$, using a polynomial number of processors $poly'(\Delta)$ processors that run in parallel and after a large amount of pre-computation. 

\begin{definition}
\label{def:tlpsecurity}
A TLP is secure if for all $\secp$ and $\Delta$, all quantum polynomial time (QPT) adversaries $A:=(A_{   1},A_{2})$ where $A_{1}$ runs in total time $O(poly(\Delta,\secp))$ and $A_{2}$ runs in parallel time $\delta(\Delta)<\Delta$ using at most $p(\Delta) = \poly(\Delta)$ parallel processors, there is a negligible function $\negl(\secp)$, such that: 


$$ \Pr\left[
  \begin{array}{l}
(tpk, tsk) \gets \TLPSetup(1^{\secp},\Delta) \\
(m_{   0},m_{   1},\text{state}) \gets A_{   1}(1^{  \secp},tpk, \Delta)\\
b  \stackrel{  \$}\leftarrow \{0,1\}\\
{o} \gets \TLPGen(m_{   b}, tpk, tsk) \\
\hline
b \gets A_{   2}(tpk,  {o},\text{state}) \\

\end{array}
\right]\leq \frac{1}{2}+\negl(\secp)$$
\end{definition}

In the literature of TLP, the  security definition includes two adversaries $A_1$ and $A_2$ because they have different run-time constraints. In this work, we use the same approach.


\subsubsection{Post Quantum Secure TLPs.}



Lai and Malavolta \cite{lai2023} proposed a candidate post-quantum secure sequential function relying on a lattice-based hash function \cite{mahmoody2011}. Using the proposed sequential squaring, it is possible to easily construct a TLP. However, this TLP will require the puzzle generator to take as many steps as the puzzle solver. Very recently, Shweta \etal \cite{agrawalr2024} proposed a TLP based on lattices that can be instantiated using the SIS-sequential function of \cite{lai2023} (with judicious parameter settings). In this scheme, the puzzle generator can create puzzles more quickly than the solver can solve them. Our scheme can leverage any post-quantum secure TLP in a black-box manner. 

\begin{lemma}[\cite{agrawalr2024}] Assuming the quantum hardness of circular small-secret LWE \cite{hsieh2023}, there exists a time-lock puzzle that is secure against polynomial-time quantum adversaries.
\end{lemma}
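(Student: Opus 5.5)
This lemma is imported from \cite{agrawalr2024}, so the plan is to reconstruct their construction and security argument at a level sufficient to check that it meets Definition~\ref{def:tlpsecurity}, rather than to give a new proof. The construction has two layers. The first is an SIS-based sequential function in the style of \cite{lai2023}: iterated application $x_{i+1} = G^{-1}(A x_i)$ of a lattice hash, together with a trapdoor that lets the puzzle generator compute the $\Delta$-th iterate cheaply. The second is an LWE-style encryption of the message $m$, whose mask is derived from the endpoint of that sequential computation. $\TLPGen$ uses the shortcut, so it runs in time $\poly(\secp,\log\Delta)$. $\TLPSolve$ recomputes the chain honestly in $\Delta$ sequential steps and then decrypts, which gives completeness and the efficiency bound $\poly(\Delta,\secp)$.

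For security I would argue by a hybrid argument against the two-stage QPT adversary $A=(A_1,A_2)$ of Definition~\ref{def:tlpsecurity}. In Hybrid~0 the puzzle encrypts $m_b$ under the key derived from the true iterate. In Hybrid~1 that derived mask is replaced with a uniformly random mask. Here one invokes circular small-secret LWE \cite{hsieh2023}: the public data legitimately contains samples correlated with the secret itself, because of the shortcut or trapdoor structure, and this circularity is exactly why the standard LWE assumption is insufficient. In Hybrid~2 the puzzle is independent of $b$, so $A_2$ has advantage exactly $0$. Summing the hybrid gaps then bounds the original advantage by $\negl(\secp)$.

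The key step is the transition from Hybrid~0 to Hybrid~1, and it has to be carried out with an assumption that is \emph{parameterized by parallel time}. I would reduce an $A_2$ that runs in depth $\delta(\Delta)<\Delta$ with $\poly(\Delta)$ processors to a distinguisher against the assumption's sequential formulation. That formulation says that, after arbitrary preprocessing (matching $A_1$'s $\poly(\Delta,\secp)$ budget), a small-secret LWE sample whose secret is only reachable through $\Delta$ sequential hash steps stays pseudorandom to any machine of depth less than $\Delta$. The reduction simulates $A_1$ on its own, embeds the challenge into $o$, and forwards $A_2$'s guess. The delicate point is that the reduction must add only $O(1)$ or $o(\Delta)$ depth beyond $A_2$, so that it still satisfies the depth bound of the assumption.

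I expect this depth bookkeeping to be the main obstacle. Quantum adversaries may be able to parallelize the iteration, and the nonstandard circular assumption has to absorb exactly this issue. I would therefore rely on the parameter setting of \cite{agrawalr2024} and the sequentiality conjecture of \cite{lai2023}, rather than attempt a reduction to worst-case lattice problems.
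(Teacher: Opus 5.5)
The paper does not prove this lemma; it imports it from \cite{agrawalr2024}. Your reconstruction, however, has two load-bearing pieces that do not match that scheme, and the paper's own remarks already rule out the first.

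\textbf{The fast-generation mechanism is wrong.} No trapdoor is known that lets the generator jump to the $\Delta$-th iterate of $x\mapsto G^{-1}(Ax)$, and none is used in \cite{agrawalr2024}. Lattice trapdoors let you sample preimages; they do not shortcut forward iteration. Section~\ref{sec::Time-lock-Encryption} says explicitly that a TLP built directly on the sequential function of \cite{lai2023} forces the generator to take as many steps as the solver. The ingredient your plan is missing is the one Section~\ref{sec:experimental} names: a lattice-based \emph{succinct randomized encoding}, realized there through reusable garbling. The puzzle is a succinct encoding of the circuit that runs the $\Delta$-step sequential computation and then releases $m$. Producing the encoding costs $\poly(\secp,\log\Delta)$ without ever computing the chain, while decoding it takes $\Delta$ sequential steps. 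This also means your explanation of why circular small-secret LWE is needed, namely correlations created by a shortcut trapdoor, is not the real one. That assumption from \cite{hsieh2023} is what makes the lattice encoding succinct for a computation of depth about $\Delta$.

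\textbf{The key hybrid step rests on an assumption that is not the hypothesis.} You invoke a ``sequential formulation'' of circular small-secret LWE: pseudorandomness of a sample whose secret is reachable only through $\Delta$ hash steps, against machines of depth below $\Delta$. That is not the lemma's hypothesis, and it is essentially a restatement of TLP security, so the reduction is close to circular. The actual argument follows the randomized-encodings blueprint for time-lock puzzles (Bitansky et al.) and keeps the two kinds of hardness separate.
\begin{itemize}
\item Circular small-secret LWE is an ordinary polynomial-time assumption, used only for the security of the encoding.
\item The timing guarantee comes from a separate sequentiality assumption on the SIS-based hash of \cite{lai2023}, or more generally a non-parallelizing language.
\item The reduction takes a depth-$<\Delta$ distinguisher between puzzles for $m_0$ and $m_1$ and runs it on a cheaply computed encoding of a circuit that outputs $m_0$ or $m_1$ according to the result of the $\Delta$-step computation. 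By the encoding's simulation security, this predicts that result in depth $<\Delta$.
\item The reduction adds only $\poly(\secp,\log\Delta)$ depth, precisely because the encoding is succinct.
\end{itemize}
Two of your instincts are correct: depth bookkeeping is the crux, and a sequentiality conjecture beyond what the lemma literally states is required. But both belong to this reduction to sequentiality, not to a time-parameterized variant of circular LWE.
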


\subsection{Commitment Scheme}\label{subsec:commit}

There are two parties involved in a commitment scheme, a \emph{sender} and a \emph{receiver}. The commitment scheme has two phases, \emph{commit} and  \emph{open}. In the \emph{commit} phase, the sender  commits to a message $m$ as $\comcom(m,r)=com$, that involves a secret value,  $r$. In the \emph{open} phase, the sender sends the opening $\hat{m}:=(m,r)$ to the receiver which verifies its correctness: $\comver(com,\hat{m})\stackrel{\scriptscriptstyle ?}=1$ and accepts if the output is $1$. A commitment scheme must satisfy (a) \textit{hiding}, it is infeasible for an adversary (i.e., the receiver) to learn any information about the committed message $m$, until the commitment ${com}$ is opened, and (b) \textit{binding}, it is infeasible for an adversary (i.e., the sender) to open a commitment ${com}$ to different values $\hat{m}':=(m',r')$ than that was used in the commit phase, i.e., infeasible to find  $\hat{m}'$, \textit{s.t.} $\comver({com},\hat{m})=\comver({com},\hat{m}')=1$, where $\hat{m}\neq \hat{m}'$.

Definition~\ref{def:commitment} in
Appendix~\ref{sec::Formal-Definition-of-Commitment Scheme} provides a formal definition of a commitment scheme.

 There exist efficient commitment schemes in the random oracle model using the well-known hash-based scheme such that committing  is: $\mathtt{H}(x||r)={com}$ and $\comver({com},\hat{x})$ requires checking: $\mathtt{H}(x||r) \stackrel{?}={com}$, where $\mathtt{H}:\{0,1\}^{*}\rightarrow \{0,1\}^{ \lambda}$ is a collision-resistant hash function, i.e., the probability to find $x$ and $x'$ such that $\mathtt{H}(x)=\mathtt{H}(x')$ is negligible in the security parameter $\lambda$. There have also been various post-quantum commitment schemes \cite{kawachi2008,XieXW13,0002KPT12,wee2023} by relying on LWE, learning parity with noise (LPN), or short integer solution (SIS) assumptions. 

 \begin{lemma}[\cite{wee2023}] Assuming the quantum hardness of the Short Integer Solutions (SIS) problem, there exists a computationally binding and statistically hiding commitment scheme.
\end{lemma}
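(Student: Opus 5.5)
The natural route is the Ajtai-style lattice commitment, whose binding reduces directly to SIS and whose hiding follows from the leftover hash lemma. Fix parameters $n=\poly(\secp)$, a modulus $q$, a message length $\ell$, and a randomness length $k \geq 2n\log q$ (the exact constant will be fixed when the hiding bound is computed). The public parameters are uniformly random matrices $\mathbf{A}\in\mathbb{Z}_q^{n\times k}$ and $\mathbf{B}\in\mathbb{Z}_q^{n\times \ell}$. To commit to a message $m\in\{0,1\}^{\ell}$, the sender samples $\mathbf{r}\in\{0,1\}^{k}$ uniformly and outputs $\comcom(m,\mathbf{r}) = \mathbf{A}\mathbf{r}+\mathbf{B}m \bmod q$. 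The verifier $\comver(com,(m,\mathbf{r}))$ checks that $m$ and $\mathbf{r}$ are binary vectors and that this equation holds. Longer messages, such as a trapdoor concatenated with randomness, are handled by increasing $\ell$, or by hashing the message first.

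For \emph{statistical hiding}, I will show that for any fixed $m$ the pair $(\mathbf{A},\mathbf{A}\mathbf{r})$ is statistically close to $(\mathbf{A},\mathbf{u})$ with $\mathbf{u}$ uniform in $\mathbb{Z}_q^n$. The map $\mathbf{r}\mapsto\mathbf{A}\mathbf{r}$ over a random $\mathbf{A}$ is a universal hash family, and $\mathbf{r}$ has min-entropy $k$. The leftover hash lemma therefore bounds the statistical distance by $\tfrac12\sqrt{q^n/2^k}$, which is at most $2^{-\Omega(n\log q)}$ and hence negligible once $k\geq 2n\log q+\omega(\log\secp)$. Since $\mathbf{B}m$ is a fixed shift, the commitment is statistically close to uniform independently of $m$. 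This holds even against unbounded, and in particular quantum, receivers, because the bound is purely information-theoretic.

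For \emph{computational binding}, I will give a reduction. Suppose a QPT adversary outputs two valid openings $(m,\mathbf{r})\neq(m',\mathbf{r}')$ of the same $com$. Set $\mathbf{z}=(\mathbf{r}-\mathbf{r}',\,m-m')$. Then $\mathbf{z}$ is nonzero and has entries in $\{-1,0,1\}$, so $\|\mathbf{z}\|\leq\sqrt{k+\ell}$, and it satisfies $[\mathbf{A}\,|\,\mathbf{B}]\,\mathbf{z}=0 \bmod q$. Thus the reduction embeds a uniform SIS instance $[\mathbf{A}\,|\,\mathbf{B}]$ as the public parameters, runs the adversary, and outputs $\mathbf{z}$. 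It succeeds with the same probability as the adversary and solves SIS with norm bound $\beta=\sqrt{k+\ell}$. This SIS instance is hard for quantum algorithms provided $q\geq\beta\cdot\tilde\omega(\sqrt{n\log n})$, by the worst-case to average-case reductions.

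The main obstacle is choosing parameters so that both properties hold simultaneously. Hiding needs $k$ large relative to $n\log q$, while binding needs $q$ large relative to $\beta\approx\sqrt{k+\ell}$ and $n$ large enough for SIS hardness. I would take $q=\poly(n)$ prime, $k=\lceil 3n\log q\rceil$ and $\ell=\poly(n)$, and check that $\beta=O(\sqrt{n\log n+\ell})$ stays within the hard regime. If a compact short-commitment variant like that of \cite{wee2023} is preferred instead, the same two-step argument applies: first a leftover hash lemma argument for hiding, then a difference-of-openings reduction to SIS for binding.
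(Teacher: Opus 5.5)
The paper does not prove this lemma; it imports it from \cite{wee2023}, so there is no internal argument to match. Your proposal gives a correct, self-contained proof via the Ajtai-type commitment $\comcom(m,\mathbf{r})=\mathbf{A}\mathbf{r}+\mathbf{B}m \bmod q$. This is essentially the scheme of \cite{kawachi2008}, which the paper also lists.

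\begin{itemize}
\item Hiding comes from the leftover hash lemma and is information-theoretic, so it holds against unbounded quantum receivers.
\item Binding is a straight-line reduction without rewinding. It turns two openings into a nonzero $\{-1,0,1\}$-vector in the kernel of $[\mathbf{A}\,|\,\mathbf{B}]$. It therefore carries over to QPT adversaries unchanged and lands in plain SIS with $\beta=\sqrt{k+\ell}$ and $q=\poly(n)$.
\end{itemize}

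Your route buys transparency and an explicit reduction to plain SIS with concrete parameters. The citation buys brevity and whatever extra features the cited construction has; neither the lemma nor Protocol~\ref{proto:pv-CVQC} needs those features.

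A few points to tighten:
\begin{itemize}
\item Your scheme is keyed, whereas Definition~\ref{def:commitment} is written without setup. Say explicitly that $(\mathbf{A},\mathbf{B})$ are uniform public parameters; in the paper's setting the honest client can place them in the CRS during $\VCSetup$. This is the right reading: a keyless non-interactive scheme cannot be both statistically hiding and binding against non-uniform adversaries, since conflicting openings exist and could be hardwired.
\item The hiding adversary may choose $m_0,m_1$ after seeing $(\mathbf{A},\mathbf{B})$. Your phrase ``for any fixed $m$'' does not literally cover this case. State the leftover-hash conclusion jointly, $(\mathbf{A},\mathbf{B},\mathbf{A}\mathbf{r})\statind(\mathbf{A},\mathbf{B},\mathbf{u})$. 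Then note that $\mathbf{u}+\mathbf{B}m$ is uniform and independent of $(\mathbf{A},\mathbf{B})$ even when $m$ depends on them.
\item Universality of $\mathbf{r}\mapsto\mathbf{A}\mathbf{r}$ holds for every modulus, not only prime $q$. Two distinct binary vectors differ by a vector with a $\pm1$ entry, which is a unit.
\item Drop the closing sentence about \cite{wee2023}. It is not needed, and whether a compact scheme's binding reduces to plain SIS by the same difference-of-openings argument depends on the scheme. That would have to be checked rather than asserted.
\end{itemize}
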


\subsection{Quantum Verification Protocols}

We recall the complexity class $\QPIP_{\tau}$  defined in \cite{aharonov2017} as the set of languages that have an interactive proof between a $\BQP$ prover and a hybrid classical-quantum verifier with a $\tau$-qubit register. In our work, we will focus only on fully classical ($\BPP$) verifiers (i.e. $\tau = 0$). 

\begin{definition}
\label{def:qpip}
A $\QPIP_{0}$ protocol  $\Pi = (P,V)$ for a $\BQP$ language $L$ with completeness $c(.)$ and soundness $s(.)$ is an interactive protocol between a QPT prover $P$ and a PPT verifier $V$ denoted as $b \leftarrow \angles{P, V(z_v)}(1^\secp, z)$ on shared input $z \in \{0,1\}^{*}$ where $b \in \{0,1\}$ is the output of $V$ indicating acceptance/rejection and $z_v \in \{0,1\}^{*}$ is the verifier's private input. Also, the following properties hold:
\begin{itemize}
    \item Completeness: For every $\secp \in \N, z \in \L$, and $z_v \in \{0,1\}^*$:
    \begin{align*}
    \Pr\left[
    \angles{P, V(z_v)}(z, 1^\secp) = 1
    \right]
    \geq c(\secp)
    \end{align*} 

\item Soundness: For any arbitrary QPT prover $P^*$, large enough $\secp \in \N$, and all $z \notin \L$  and $z_v \in \{0,1\}^*$:
    \begin{align*}
    \Pr\left[
    \angles{P^*, V(z_v)}(z, 1^\secp) = 1
    \right]
    \leq s(\secp)
    \end{align*} 
\end{itemize}
where $c-s \geq 1/\poly(\secp)$. Without loss of generality, we assume that $s = \negl(\secp)$.
\end{definition}

Given a $\QPIP_0$ protocol, one can use it to classically verify the computation of any $\BQP$ computation that was delegated to an untrusted quantum prover. This task is referred to in the literature as classical verification of quantum computation (CVQC). Simply put, one can define $L = \{(C,x,y) \mid C(x) = y\}$ to be the set of quantum circuit evaluations and their respective outputs. For simplicity, in this work, we assume that $y = 1$ for all circuit-input pairs $(C,x)$ in $L$ (i.e., circuits are Boolean) and restrict our attention to quantum circuits that accept and output classical strings. Since the classical verifier has no (efficient) way to evaluate this quantum circuit, it falls to the prover to provide a classical proof convincing the verifier of the evaluation's correctness. 

In the original Mahadev CVQC protocol~\cite{mahadev2018}, the public parameter $pk$ represents a sequence of $n = O(|C|)$ public keys $pk_1,\ldots,pk_n$ each used to evaluate a  trapdoor claw-free function (TCF). These keys allow the prover to coherently evaluate the TCFs over its computation instance and send a classical string representing a commitment of its results. The verifier then initiates a 2-round challenge-response mechanism to test and verify the prover's claim.


As demonstrated in \cite{alagic2020,chia2020}, the above 4-round commit-and-measure protocol can be transformed into a non-interactive protocol (with setup) via the round-collapsing Fiat-Shamir transform \cite{fiat1987}, which was shown to be sound in the QROM \cite{don2019,liu2019}. However, this transformation still results in a protocol that is privately verifiable. We describe their scheme as follows:

\begin{proto}[Privately Verifiable CVQC \cite{chia2020}]
\label{proto:2round-cvqc}
Let $x \in L$ be the instance of a $\BQP$ language $L$. Assuming the hardness of LWE, there exists a construction of a 2-round privately verifiable CVQC protocol $\Pi = (V_1,P,V_{out})$ between verifier $V = (V_1,V_{out})$ and prover $P$ in the QROM for language $L$ that operates as follows:
\begin{enumerate}[leftmargin=4.8mm]
    \item $V_1(1^\secp, x)$: Given security parameter $1^\secp$ and an instance $x$, generate a public-private key pair $(pk,sk)$, then send $pk$ to the prover.
    \item $P(pk,x)$: Given $x$ and $pk$, generate a classical proof string $\pi$ and send it to $V$.
    \item $V_{out}(pk,sk,\pi)$: Given $pk$, the private verification key $sk$ and proof $\pi$, verifier $V$ either outputs 1 (accept) or 0 (reject).
\end{enumerate}
\end{proto}


Since the above protocol is a 2-round $\QPIP_0$, it has the same completeness and soundness properties formalized in Definition~\ref{def:qpip}. CVQC schemes may also have the added property of being \textit{publicly verifiable} where the correctness of the computation can be verified by any arbitrary third party (and not just by the delegator) using only the transcript of an interaction. 


%% file: sections/04_protocol.tex
\section{Time-Delayed Publicly Verifiable CVQC}
\label{sec:pvCVQC}

We start by defining a publicly verifiable time-delayed CVQC scheme following the syntax of \cite{parno2012,kalai2019} for classical verifiable computation schemes but with a notion that is closer to publicly verifiable non-adaptive preprocessing SNARGs \cite{bitansky2013b}. Without loss of generality, we focus on the verifiability of Boolean circuits $C$ since in order to prove that $C'(x) = y$ for some $y \neq 1$, it suffices to instead prove that $C_y(x) = 1$ where $C_y$ is the circuit that runs $C'(x)$ and outputs 1 if and only if $C'(x) = y$. It is thus natural to generalize this proof system to $\QMA$ languages of the form $L = \{(C,z) \mid \exists y : C(z) = y\}$.

\begin{definition}\label{def::pvtd-CVQC} A publicly verifiable time-delayed CVQC scheme for a family of Boolean quantum circuits $\{\cC_\lambda: \{0,1\}^{n(\lambda)} \to \{0,1\}\}_{n \in \mathbb{N}}$ with a time-bound $T = T(\secp)$ consists of four algorithms $\VC_T = (\VCSetup,\VCProve,\VCReveal,$ $\VCVerify)$ that are defined as follows:
\begin{itemize}[leftmargin=4.5mm]
    \item $\VCSetup(1^\secp,C,x, T)$: A PPT algorithm that takes as input a security parameter $1^\secp$, a circuit $C \in \cC_\lambda$, input $x \in \{0,1\}^{n}$, and time bound $T$. It outputs a common reference string $crs$.
    \item $\VCProve(crs,C,x)$: A QPT algorithm, where given $crs$, a circuit $C \in \cC_\secp$, and input $x \in \{0,1\}^{n}$, outputs a timestamped proof $\pi_\tau$. 
    \item $\VCReveal(crs)$: A QPT algorithm where, given $crs$ outputs a string $y$ that facilitates verification.   
    \item $\VCVerify(crs,C,x,\pi_\tau,y)$: A PPT algorithm where, given $crs$, a circuit $C \in \cC_\secp$, input $x \in \{0,1\}^{\poly(\secp)}$,  a timestamped proof $\pi_\tau$, and $y$, outputs 1 (accept) or 0 (reject).   
\end{itemize}
The following properties must also be satisfied:
\begin{itemize}[label=$\bullet$]
    \item Completeness: For every $\lambda \in \N, T = T(\secp), C \in \cC_\lambda$, $x \in \{0,1\}^{\poly(\secp)}$ such that $C(x) = 1$, the following must hold:
    
        \begin{align*}
        \Pr\left[
        \begin{array}{c}
        \VCVerify(crs,C,\\ x, \pi_\tau,y)=1
        \end{array}
        \middle| 
        \begin{array}{c}
        crs \leftarrow \VCSetup(1^\secp, C,x, T) \\
        \pi_\tau \gets \VCProve(crs, C,x) \\
        y \gets \VCReveal(crs)
        \end{array}
        \right]
        = 1
        \end{align*}
    \item Efficiency: For every $\secp$, all algorithms $\VCSetup$ runs in time $\poly(\lambda, |C|,|x|, $ $T)$, while $\VCReveal$, $\VCProve$ run in time that is $\poly(\secp, |C|,|x|,T)$, and $\VCVerify$ runs in time that is $\poly(\lambda,$ $\log depth(C), |x|,\log T)$.
    \item $\Delta-$Soundness: 
    For every QPT adversary $A$ that runs in time $\delta(\Delta) < \Delta$ for some $\Delta > T^{1+\epsilon} = \poly(T)$ and $\epsilon > 0$, there exists a negligible function $\negl(\cdot)$ such that:
    \begin{equation*}
        \Pr[\textbf{Exp}_A^{VC}[\secp,T] = 1] \leq \negl(\secp)
    \end{equation*}
    where $\textbf{Exp}_A^{VC}[\secp,T]$ is defined as the following experiment.
    
\end{itemize}

\FrameSep2pt
\begin{framed}
\label{exp:cvqc}
\noindent Experiment $\textbf{Exp}_A^{VC}[\secp,T]$:  
\\ \hspace*{\parindent} 
$(C,x) \gets A(1^\secp)$ where $C \in \mathcal{C}_\secp$ and $x\in \{0,1\}^{\poly(\secp)}$
\\ \hspace*{\parindent}
$crs \leftarrow \VCSetup(1^\secp, C,x,T)$
\\ \hspace*{\parindent}
$(\pi_\tau,y) \gets A^\mathcal{B}(crs,C,x)$
\\ \hspace*{\parindent}
$b_1 \gets \VCVerify(crs,C,x,\pi_{\tau}, y)$ 
\\ \hspace*{\parindent}
$(sk,r) \gets \VCReveal(crs)$
\\ \hspace*{\parindent}
$b_2 \gets \VCVerify(crs,C,x,\pi_{\tau}, (sk,r))$
\\ \hspace*{\parindent}
Output $(b_1 \vee b_2) \wedge (C(x)=0 \vee sk\neq y)$
\end{framed}


\end{definition}

The experiment is essentially divided into two verifications that challenge the validity of the adversary's generated proof, one that is conducted before time $\Delta$ (where $sk$ is hidden) and one that is conducted at time $\Delta$ after $sk$ is revealed by $\VCReveal$. The first pre-$\Delta$ verification tests the adversary's proof against its own guess of the secret key $y$. The second verification tests the \textit{same} proof against the real secret key $sk$ to ensure that a blind pre-$\Delta$ forgery of the proof still fails under $sk$. Only proofs that have been generated before $sk$ is revealed are considered valid forgeries\footnote{In general, a privately verifiable scheme, which we make use of in our protocol, does not guarantee the soundness of proofs that have been generated \textit{after} the secret verification key $sk$ is disclosed.} and, in order to enforce this, algorithms are given access to a trusted ``time-stamping'' service (e.g., public blockchain) $\cB(\cdot)$ ~\cite{mahmoody2013a} that can generate a verifiable time-stamp at time $\tau$ for any string. Thus, in order to win, the adversary must generate a false proof before time $\Delta$ that is valid against its own fake key $y$ or against $sk$, when it is revealed.

\subsection{The Proposed Scheme}

Our construction, shown in Figure~\ref{fig:mainProtocol}, makes use of Protocol~\ref{proto:2round-cvqc}, the 2-round CVQC protocol, along with a time-lock puzzle scheme to achieve public verifiability, and a commitment scheme to ensure the outsourced puzzle containing the verification key was correctly revealed. 
Figure~\ref{fig:overview-detailed} outlines the different phases of the protocol.

\begin{figure*}[!ht]
\centering
\begin{minipage}{\textwidth}
\begin{framed}
\noindent
\begin{proto}[Time-Delayed Publicly Verifiable CVQC]
\label{proto:pv-CVQC}
    Let $\Pi = (V_1,P,V_{out})$ be a 2-round privately verifiable CVQC in the QROM, let $(\comcom, \comver)$ be a commitment scheme, and let $\TLP = (\TLPSetup,\TLPGen, $ $ \TLPSolve)$ be a time-lock puzzle scheme. All algorithms have access to a timestamping service $\mathcal{B}$ (e.g., public blockchain); $\mathcal{B}$ does not escrow secrets. Verifiers may be offline at proof-generation time; they only need access to $\mathcal{B}$’s public record to validate the embedded timestamp later during verification. 
    The construction of the time-delayed publicly verifiable CVQC $\VC_{T}  = (\VCSetup, \VCProve, \VCReveal, \VCVerify)$ for a family of quantum circuits $\cC_\lambda$ proceeds as follows:

\begin{itemize}
    \item \textbf{Setup Phase:} $\VCSetup(1^\secp,C,x,T)$\\ Given security parameter $1^\secp$, circuit $C \in \cC_\lambda$, input $x \in \{0,1\}^{\poly(\secp)}$, and time-bound $T = T(\lambda)$: 
    \begin{enumerate}
        \item Set $\Delta > T^{1+\epsilon}$ for some $\epsilon > 0$
        \item Generate a public-private key pair for the time-lock puzzle $(tpk, tsk) \leftarrow \TLPSetup(1^\secp, \Delta)$
        \item Run $(pk,sk) \gets V_1(1^\secp,(C,x))$
        \item Let $d \gets \comcom(sk,r)$ where $r \xleftarrow{\$} \{0,1\}^\secp$
        \item Generate puzzle $o \gets \TLPGen((sk,r), tpk,tsk)$
        \item Output $crs = (tpk,pk,o,d,\Delta)$. 
    \end{enumerate}
    \item \textbf{Prove Phase:} $\VCProve(crs, C,x)$ \\ 
    Given $crs$, circuit $C \in \cC_\lambda$, and input $x \in \{0,1\}^{\poly(\secp)}$: 
    \begin{enumerate}
        \item Extract $pk$ from $crs$
        \item Run $\pi \gets P(pk,(C,x))$
        \item Timestamp proof: $\pi_\tau \gets \mathcal{B}(\pi)$
        \item Return $\pi_\tau$
    \end{enumerate}
    \item \textbf{Reveal Phase:} $\VCReveal(crs)$ \\ 
    Given $crs$: 
    \begin{enumerate}
        \item Extract $tpk,o$ from $crs$
        \item Run $(sk',r') \gets \TLPSolve(tpk,o)$
        \item Return $y = (sk',r')$
    \end{enumerate}
    \item \textbf{Verify Phase:} $\VCVerify(crs,C,x,\pi_\tau,y)$:\\ Given $crs$, circuit $C \in \cC_\lambda$, input $x \in \{0,1\}^{\poly(\secp)}$, decommitment $y$, and timestamped proof string $\pi_\tau$:
    \begin{enumerate}
        \item If $\tau > \Delta$, output 0
        \item Extract $(pk,d)$ from $crs$
        \item Parse $y = (sk',r')$
        \item Let $b \gets \comver(d,(sk',r'))$. If $b = 0$, output 0
        \item Output $V_{out}(pk,sk',\pi_\tau)$
    \end{enumerate}
\end{itemize}
\end{proto}
\end{framed}
\end{minipage}
\caption{The Time-Delayed PV-CVQC Protocol in the QROM}
\label{fig:mainProtocol}
\end{figure*}


\input{sections/figure-detailed-workflow}

The protocol starts by running the setup phase $\VCSetup$, which would be executed by an honest client who wants to outsource the computation of some circuit $C$ on input $x$. The time bound $T = T(\cC_\lambda)$ is a function of the circuit family to which the computation belongs and can be set to be an upper bound on the depth of any $C \in \cC_\lambda$. The algorithm $\VCSetup$ will then set the time-lock puzzle difficulty parameter $\Delta$ to be any value greater than $T + T_{prove}$ where $T_{prove}$ is the running time of the prover's $P$ algorithm. This ensures that the computation is guaranteed to be completed before the puzzle is solved. The verification algorithm of the privately verifiable algorithm $V_1$ is then run to obtain the instance-specific public-private parameter $(pk,sk)$ pair of $\Pi$ where the instance is a circuit-input pair $(C,x)$. Finally, the puzzle and commitment of $sk$ are  generated and the CRS is constructed to contain the public parameter $tpk$ for solving the puzzle, the public key $pk$ for running the prover of $\Pi$, the puzzle $o$ and commitment $d$ of $sk$. Recall that the CRS is not sampled uniformly at random but is deliberately structured, which prevents us from simply replacing it with the output of a random oracle. 

The $\VCProve$ algorithm,  typically executed by the server, is used to run the computation and proof generation procedure of $\Pi$ using the public key $pk$ and the statement $(C,x)$. Concurrently, $\VCReveal$ is also executed, which starts solving the puzzle provided in the CRS. Since the puzzle has difficulty parameter $\Delta > T^{1+\epsilon}$, it is expected that $\VCProve$ would complete and output the proof $\pi$ before $\VCReveal$ outputs the solution (the verification key).

Finally, when it comes to the verifier, it will run $\VCVerify$, which will execute the verification algorithm $V_{out}$ of $\Pi$ and output the result since it now has access to the key $sk$ and the proof $\pi$.


\begin{theorem} 
\label{thm:pv-cvqc}
If $\comcom$ is a secure post-quantum commitment scheme, $\TLP$ is a secure post-quantum time-lock puzzle, and $\Pi$ is a 2-round privately verifiable CVQC in the QROM, then Protocol~\ref{proto:pv-CVQC} is a secure time-delayed publicly verifiable non-interactive CVQC protocol for a circuit class $\cC$ under the quantum random oracle with CRS model, w.r.t. Definition~\ref{def::pvtd-CVQC}.
\end{theorem}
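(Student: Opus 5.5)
We will verify completeness and efficiency directly from the construction. We will then prove $\Delta$-soundness by a short hybrid argument that reduces a winning adversary either to a binding break of $\comcom$ or to a soundness break of $\Pi$. The TLP security is used to replace the puzzle by an encryption of junk, which is what makes the reduction to $\Pi$ possible.

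\emph{Completeness} follows by composing the three correctness properties. $\VCProve$ runs $P(pk,(C,x))$ and timestamps the result; by the choice $\Delta > T + T_{prove}$ we get $\tau \le \Delta$. TLP completeness gives $\TLPSolve(tpk,o)=(sk,r)$, so $\comver(d,(sk,r))=1$ by correctness of the commitment. Completeness of $\Pi$ then gives $V_{out}(pk,sk,\pi)=1$. (We take $\Pi$'s completeness to be $1$, or treat it as perfect after amplification, to match the definition.) \emph{Efficiency} is inherited: $\VCSetup$ runs $\TLPSetup$, $\TLPGen$, $V_1$ and $\comcom$; $\VCReveal$ runs in $\poly(\Delta,\secp)=\poly(T,\secp)$; $\VCVerify$ is one commitment check plus $V_{out}$.

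For soundness, fix a QPT $A$ running in time $\delta(\Delta)<\Delta$, and let $W$ be the event that the experiment outputs 1. Split $W$ into two cases.

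\textbf{Case 1: $sk\neq y$ and $b_1=1$.} Then $y=(sk',r')$ with $\comver(d,(sk',r'))=1$ and $sk'\neq sk$. Together with the honest opening $(sk,r)$ this is a double opening of $d$. The binding reduction embeds a challenge commitment and simulates the rest of $\VCSetup$ itself, including running $\TLPGen$ on the committed value. Because the binding adversary may run in polynomial time, it can afford to run $A$ and then solve the puzzle, so no time bound interferes. Hence $\Pr[\text{Case 1}]\le\negl$. The remaining events with $sk\neq y$ and only $b_2=1$, or with $sk=y$, all collapse to $C(x)=0 \wedge b_2=1$ (when $sk=y$ we have $b_1=b_2$).

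\textbf{Case 2: $C(x)=0$ and $V_{out}(pk,sk,\pi)=1$, with $\tau\le\Delta$.} The timestamp service $\cB$ guarantees that $\pi$ was produced by $A$ within time $\delta(\Delta)<\Delta$ from receiving $crs$.
\begin{itemize}
\item[] \textbf{Hybrid 1.} Replace $o$ by $\TLPGen(0^{|sk|+\secp},tpk,tsk)$. The distinguisher for this step is split as follows: $A_1$ samples $(pk,sk,r,d)$ and outputs the messages $m_0=(sk,r)$ and $m_1=0$; $A_2$ runs $A$ on the crs, which takes parallel time $<\Delta$. Checking $V_{out}(pk,sk,\pi)$ also requires $A_2$'s time, but this check is cheap and fits within the slack $\Delta-\delta(\Delta)$. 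TLP security (Definition~\ref{def:tlpsecurity}) therefore bounds the change in the winning probability by a negligible amount.
\item[] \textbf{Hybrid 2.} Replace $d$ by $\comcom(0,r)$. Hiding of $\comcom$ bounds the change; statistical hiding as in the cited SIS scheme avoids any timing issue.
\end{itemize}
In Hybrid 2 the crs depends only on $pk$. A reduction $P^*$ against $\Pi$ receives $pk$ from $V_1$, builds the crs itself, runs $A$ and forwards $\pi$, simulating the QROM by passing through its own oracle. By soundness of $\Pi$ (Definition~\ref{def:qpip}, with $s=\negl$ and $(C,x)\notin L$), the probability that $V_{out}$ accepts is negligible.

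\emph{Main obstacle.} The delicate step is Hybrid 1: fitting the adversary and the verification check inside the TLP adversary's parallel-time budget, and handling the fact that $(C,x)$ is chosen adaptively before setup. Adaptivity is handled by letting $A_1$ run $A$'s first stage. The timing is handled by relying on the timestamp bound and the gap $\Delta>T^{1+\epsilon}$. A secondary subtlety is the QROM: all reductions must forward oracle queries to the same random oracle and may not reprogram it, which holds because none of the steps programs $H$.
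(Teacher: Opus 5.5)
Your argument uses the same four ingredients as the paper (binding, TLP security, hiding, soundness of $\Pi$), but one step is false as written. After removing Case~1 ($sk\neq y$, $b_1=1$), you claim the residual events ``collapse to $C(x)=0\wedge b_2=1$''. This does not follow for the sub-event $sk\neq y$, $b_1=0$, $b_2=1$. There the literal output condition $(b_1\vee b_2)\wedge(C(x)=0\vee sk\neq y)$ is already satisfied by $sk\neq y$, so nothing forces $C(x)=0$.

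In fact, consider an adversary that picks $(C,x)$ with $C(x)=1$, runs the honest prover (well within $\Delta$), and outputs a junk $y$. It gets $b_2=1$ by completeness and wins with non-negligible probability. So no proof can establish $\Delta$-soundness for the experiment exactly as written. What your case analysis actually proves is soundness for the evidently intended condition, e.g.\ $(b_1\wedge y\neq sk)\vee(b_2\wedge C(x)=0)$ or $(b_1\vee b_2)\wedge C(x)=0$. For the latter, note that $b_1$ together with $sk'=sk$ implies $b_2$, by perfect completeness of the commitment. You should state this correction explicitly rather than assert a collapse. The paper's proof has the same defect: its final claim asserts that $b_2=1$ forces $y=sk^*$, which is equally unjustified.

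With the winning condition fixed, your proof is correct, and it is organized differently from the paper's. The paper runs a single hybrid chain:
\begin{itemize}
\item $H_0\to H_1$: use $sk$ instead of $sk'$ inside $V_{out}$, justified by binding;
\item $H_1\to H_2$: puzzle of a random string, justified by TLP security;
\item $H_2\to H_3$: commitment to that same string, justified by hiding.
\end{itemize}
It then bounds $b_1$ and $b_2$ in $H_3$, and the bound on $b_1$ is only an informal appeal to ``the security of the underlying primitives''.

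You instead peel off the fake-key event by a direct binding reduction in the real experiment. That reduction holds the honest opening $(sk,r)$ and needs no time restriction. Only then do you hybridize the crs, and only for the single event $C(x)=0\wedge V_{out}(pk,sk,\pi)=1$, which each reduction can test because it knows $sk$. This makes every reduction's distinguishing test explicit, which the paper leaves implicit.

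Two small caveats remain:
\begin{itemize}
\item $\delta(\Delta)<\Delta$ alone leaves no guaranteed room for $A_2$ to also run $V_{out}$. Your slack argument needs an explicit gap such as $\delta(\Delta)+\poly(\secp,|C|)<\Delta$; the paper silently assumes the same by saying $B_2$ has $A$'s running time.
\item Amplification gives completeness $1-\negl(\secp)$, not exactly $1$.
\end{itemize}
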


To prove the theorem we need to show that it satisfies the correctness, efficiency, and soundness properties of Definition~\ref{sec:pvCVQC}. Completeness follows from the underlying 2-round CVQC protocol, the time-lock puzzle, and the commitment scheme. Specifically, for any $C \in \cC_\lambda$, $x \in \{0,1\}^{n}$ and $T = T(\lambda)$ and any honestly generated $crs = (tpk,pk,o,d, \Delta)$:
$$
\VCVerify(crs,C,x,\pi,y) = V_{out}(pk,sk,\pi) = 1    
$$
where $\pi \gets P(pk,(C,x))$, $d \gets \comcom(sk,r)$, and $y = (sk,r) \gets \TLPSolve(tpk,o)$.

Regarding the efficiency requirement, it is clear that $\VCSetup$ will run in time $\poly(\secp,|C|,|x|,T)$ due to the size of $pk$. Furthermore, $\VCProve$ runs in time $\poly(\secp,|C|,|x|,T)$ as it will need to compute the circuit and the proof, and $\VCReveal$ runs in time $\poly(\secp,T)$ due to $\TLPSolve$, which will solve the puzzle in time $\Delta = \poly(T)$. 



\begin{remark}[Instance-Independent Variant]
In our construction, the CRS is circuit-dependent since the underlying 2-round CVQC requires this circuit to generate the verification key. Alternatively, Alagic et al. \cite{alagic2020} show a variant of a 2-round privately verifiable CVQC that is instance-independent at the cost of increasing the number of parallel executions by a constant. This allows us to make $\VCSetup$ generate a CRS that is independent of the circuit-input pair $(C,x)$. 
\end{remark}

\subsection{Proof of Security}
We prove that the protocol has $\Delta-$soundness regarding Definition~\ref{proto:pv-CVQC}.
\begin{proof}
We show through a sequence of hybrids that the security of our scheme can be reduced to the soundness of the underlying 2-round CVQC, the security of the TLP, and the commitment scheme.
\begin{itemize}
    \item $H_0$: This is the original experiment as stated in Definition~\ref{exp:cvqc} where the adversary's role as a malicious prover is to find a convincing false proof that passes verification.
    \item $H_1$: This is the same as $H_0$ except that Step 5 of the Verify phase is replaced with the following:
    \begin{equation*}
        \text{Output} \; V_{out}(pk,sk,\pi)
    \end{equation*}
    In other words, the verification procedure of the underlying CVQC will use the original $sk$ instead of $sk'$.
    \item $H_2$: This is the same as $H_1$ except that, in Step 5 of the Setup phase, the puzzle is generated using a random string pair $sk^* \xleftarrow{\$} \{0,1\}^{n}$, where $n = |sk|+|r|$, instead of $(sk,r)$. That is, the puzzle is generated as follows:
    \begin{equation*}
        o^* \gets \TLPGen(tpk,tsk,sk^*)
    \end{equation*}
    \item $H_3$: The is the same as $H_2$ except that, in Step 4 of the Setup phase, the challenger commits to the same random string $sk \xleftarrow{\$} \{0,1\}^{n}$ used in the puzzle, instead of $sk$. That is, the commitment is generated as follows:
    \begin{equation*}
        d^* \gets \Commit(sk^*) 
    \end{equation*}
\end{itemize}
Let $\text{Adv}^{VC}_{H_i}(A) = |\Pr\limits_{H_i}[\textbf{Exp}_A^{VC}[\secp,T] = 1]$ denote the adversary's success probability in the soundness game of Definition~\ref{def::pvtd-CVQC}. The following sequence of claims will show that $\text{Adv}^{VC}_{H_0}(A) \leq \negl(\secp)$.

\begin{claim}  
\label{claim:h0}
If the commitment scheme is computationally binding against poly-time quantum adversaries, then:
$$|\text{Adv}^{VC}_{H_0}(A) - \text{Adv}^{VC}_{H_1}(A)| \leq \negl(\secp)$$
\end{claim}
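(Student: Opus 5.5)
The natural route is a standard ``identical-until-bad'' argument. Hybrids $H_0$ and $H_1$ differ only in Step 5 of the Verify phase. $H_0$ runs $V_{out}(pk,sk',\pi_\tau)$ on the key $sk'$ parsed from whatever string $y$ is supplied, while $H_1$ always runs $V_{out}(pk,sk,\pi_\tau)$ with the key actually fixed at setup. Step 4 of the Verify phase, however, only lets execution reach Step 5 when $\comver(d,(sk',r'))=1$.

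Define the event $\mathsf{Bad}$ as follows: in either of the two verification calls of $\textbf{Exp}_A^{VC}$, the decommitment $(sk',r')$ passes $\comver(d,\cdot)$ but $(sk',r')\neq(sk,r)$. Outside $\mathsf{Bad}$, every call that reaches Step 5 has $sk'=sk$, so both verification bits $b_1,b_2$ coincide in the two hybrids.

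It remains to check the final predicate $(b_1\vee b_2)\wedge(C(x)=0\vee sk\neq y)$. The term $sk\neq y$ does not depend on the hybrid. The term $C(x)=0$ is determined by the adversary's first output. Hence the two hybrids produce identical outputs conditioned on $\neg\mathsf{Bad}$. By the difference lemma,
\[
|\text{Adv}^{VC}_{H_0}(A)-\text{Adv}^{VC}_{H_1}(A)|\le \Pr[\mathsf{Bad}].
\]

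To bound $\Pr[\mathsf{Bad}]$, build a reduction $B$ against computational binding of $\comcom$. $B$ plays the honest challenger of $H_0$ itself:
\begin{enumerate}
\item It receives $(C,x)$ from $A$.
\item It runs $\TLPSetup$ and $V_1$, samples $r$, computes $d\gets\comcom(sk,r)$ and $o\gets\TLPGen((sk,r),tpk,tsk)$, and gives $crs$ to $A$.
\item It simulates the timestamping service $\cB$ and the quantum random oracle; if $\comcom$ is the hash-based scheme, it forwards oracle queries to its own oracle.
\item It receives $(\pi_\tau,y)$ from $A$ and computes $\VCReveal(crs)$ honestly. By TLP completeness this returns exactly $(sk,r)$.
\item Whichever decommitment triggers $\mathsf{Bad}$, it outputs the pair $(d,(sk,r),(sk',r'))$.
\end{enumerate}
Both openings verify against $d$ and they are distinct, so $B$ breaks binding with probability $\Pr[\mathsf{Bad}]$. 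Note that the second verification call uses $(sk,r)$ itself and never triggers $\mathsf{Bad}$; only the adversarial $y$ matters.

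Running time is the main point to check. $B$ must run $\TLPSolve$, which takes time $\poly(\Delta,\secp)=\poly(T,\secp)$, and $A$ runs in time less than $\Delta$. Both are polynomial in $\secp$, since $T=T(\secp)$ is a polynomial. So $B$ is a QPT adversary, and binding against poly-time quantum adversaries applies. Alternatively, $B$ can skip solving the puzzle, since it generated $(sk,r)$ itself.

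One subtlety concerns the QROM. $A$ can query the random oracle in superposition, and the binding game for hash-based commitments must be the quantum-query (collision-resistance in QROM) version. I expect this to be the only nontrivial point. I would handle it by assuming $\comcom$ is binding in the model in which $A$ is given oracle access, which holds for the SIS-based commitment of \cite{wee2023}, needing no oracle at all, and for QROM-collision-resistant hashes. The two hybrids therefore differ by at most $\negl(\secp)$.
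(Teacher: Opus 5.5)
Your proposal is correct and follows the same route as the paper: the hybrids are identical unless the adversary's $y=(sk',r')$ passes $\comver(d,\cdot)$ with $sk'\neq sk$, and that event is reduced to computational binding. Your version also supplies details the paper omits: the difference-lemma step, the observation that the reduction knows $(sk,r)$ and so need not solve the puzzle, and the QROM caveat.

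One small fix is needed. Define $\mathsf{Bad}$ by $sk'\neq sk$ rather than $(sk',r')\neq(sk,r)$. The binding notion of Definition~\ref{def:commitment} only rules out openings to distinct messages $m_0\neq m_1$, so a pair with $sk'=sk$ and $r'\neq r$ is not a binding break. When $sk'=sk$, Step 5 is identical in both hybrids anyway, so the narrower event suffices.
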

\begin{proof}
Let $A$ be a QPT algorithm that succeeds in distinguishing between $H_0$ and $H_1$ with non-negligible advantage. Then there exists an efficient attacker $B$ that acts as the challenger for $A$ and breaks the binding property of the commitment scheme (as stated in Definition~\ref{def:commitment}) as follows. The attacker $B$ would simulate the original game for $A$ exactly up until Step 5 of the Verify phase. Here, $H_0$ and $H_1$ will differ only if $sk' \neq sk$ but $\comver(d,(sk',r')) = 1$, which would imply breaking the binding property of the commitment scheme. Thus, as long as the commitment scheme is computationally binding the adversary's advantage of distinguishing between $H_0$ and $H_1$ is negligible. 
\renewcommand{\qedsymbol}{} 
\qedhere$\blacksquare$
\end{proof}

\begin{claim}  
\label{claim:h1}
Assuming that the TLP is secure against poly-time quantum adversaries, it holds then $|\text{Adv}^{VC}_{H_1}(A) - \text{Adv}^{VC}_{H_2}(A)| \leq \negl(\secp)$.
\end{claim}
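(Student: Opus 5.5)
We reduce to TLP security (Definition~\ref{def:tlpsecurity}) by building a distinguisher $B=(B_1,B_2)$ from any $A$ whose success probability differs noticeably between $H_1$ and $H_2$. The only difference between the hybrids is the plaintext placed in the puzzle: $(sk,r)$ in $H_1$ versus a uniformly random $sk^*$ of the same length in $H_2$. Everything else in the crs ($tpk$, $pk$, $d$, $\Delta$) is sampled identically, and in $H_1$ the verification already uses the real $sk$ rather than the revealed $sk'$. So the guess $y$ output by $A$ only enters through the predicate $sk\neq y$ and the commitment check, and the puzzle content never affects verification except through $A$'s view.

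\textbf{Pre-processing phase $B_1$.} On input $(1^\secp,tpk,\Delta)$, $B_1$ first runs $A$ to obtain $(C,x)$. It then runs $(pk,sk)\gets V_1(1^\secp,(C,x))$, samples $r$, computes $d\gets\comcom(sk,r)$, and samples $sk^*$ uniformly. It outputs $m_0=(sk,r)$ and $m_1=sk^*$, with $\text{state}=(C,x,pk,sk,r,d)$. This runs in $\poly(\secp,|C|,|x|)$ time, which is within the $O(\poly(\Delta,\secp))$ budget.

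\textbf{Online phase $B_2$.} Given the challenge puzzle $o$, $B_2$ forms $crs=(tpk,pk,o,d,\Delta)$ and runs $A$ on $(crs,C,x)$, answering random-oracle queries with an efficiently simulatable QRO (e.g.\ a $2q$-wise independent function) and timestamp queries itself. It receives $(\pi_\tau,y)$. Since $B_2$ knows $sk$ and $r$, it can compute the experiment's output bit without solving the puzzle: $b_1$ uses $y$ (commitment check, then $V_{out}(pk,sk,\pi)$ as in $H_1$), $b_2$ uses $(sk,r)$, and the final predicate also needs $C(x)=0$. $B_2$ outputs $0$ if the experiment outputs $1$, and $1$ otherwise. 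When $b=0$ it perfectly simulates $H_1$, and when $b=1$ it perfectly simulates $H_2$, so its advantage equals $|\text{Adv}_{H_1}-\text{Adv}_{H_2}|/2$.

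\textbf{Main obstacle: meeting the timing constraint on $B_2$.} $B_2$ must run in parallel time $\delta(\Delta)<\Delta$, so three things must fit in that window.
\begin{itemize}
\item \emph{$A$'s own run.} This is guaranteed, since $\Delta$-soundness only quantifies over $A$ running in time $\delta(\Delta)<\Delta$, and any timestamp $\tau>\Delta$ is rejected anyway. If needed, $B_2$ simply truncates $A$ at time $\Delta$.
\item \emph{Running $V_{out}$.} This is classical and cheap relative to $\Delta$.
\item \emph{Evaluating $C(x)=0$.} This is a quantum computation of time $O(T)$, and $\Delta>T^{1+\epsilon}$ leaves room. Cleaner still, $B_1$ can evaluate $C(x)$ during pre-processing and store the bit in the state, amplified by repetition, since membership in the $\BQP$ language is decided with bounded error.
\end{itemize}
Under these choices $B_2$ remains within time $\delta(\Delta)+\poly(\secp)<\Delta$ after a slight slack in $\delta$. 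TLP security then bounds the advantage by $\negl(\secp)$, which proves the claim.
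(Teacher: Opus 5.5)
Your reduction is the one the paper uses: a TLP challenge on $m_0=(sk,r)$ versus $m_1=sk^*$, with $B_1$ doing the setup and $B_2$ running $A$ inside the $\delta(\Delta)<\Delta$ window. You are also more explicit than the paper about obtaining $(C,x)$, precomputing $C(x)$, and simulating the oracle.

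The gap is in how you evaluate the game. You assert that ``the puzzle content never affects verification except through $A$'s view.'' On that basis $B_2$ computes $b_2$ and the predicate $sk\neq y$ from the stored $(sk,r)$, and you claim a perfect simulation of $H_2$ when $b=1$. But $H_1$ only replaced Step 5 of Verify. The second verification still runs $\VCReveal(crs)=\TLPSolve(tpk,o)$ and feeds the result into the commitment check (Step 4) and into the output predicate. In $H_2$ the crs contains $o^*$, so Reveal returns $sk^*$, not $(sk,r)$. The paper reads Reveal exactly this way in $H_3$, where it ``reveals $(sk^*,r)$.'' So for $b=1$ your $B_2$ computes a different bit than $H_2$ does. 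It also cannot compute the correct one, because the revealed value is $m_b$, which depends on the hidden challenge bit and is available only by solving the puzzle.

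This is not cosmetic: for $H_2$ as literally defined, the claim is false. Let $A$ choose a true instance ($C(x)=1$) and output the honest proof $\pi\gets P(pk,(C,x))$, timestamped before $\Delta$, together with a junk $y$.
\begin{itemize}
\item In $H_1$, $b_1=0$, but $b_2=V_{out}(pk,sk,\pi)=1$ and $sk\neq y$, so $A$ wins with probability close to $1$.
\item In $H_2$, $b_1=0$ and also $b_2=0$, since $\comver(d,sk^*)=0$ for $d=\comcom(sk,r)$ except with negligible probability. So $A$ essentially never wins.
\end{itemize}

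Your argument needs an explicit convention to go through. First, in $H_1$, replace the post-$\Delta$ Reveal step by the challenger's stored $(sk,r)$; by TLP completeness this changes nothing. Then define $H_2$ from that game by changing only the puzzle handed to $A$. The outcome is now a fixed function of $A$'s output, $\tau$, $C(x)$ and $(pk,sk,r,d)$, so your $B_2$ is an exact simulator and the rest of your proof goes through. The paper's own proof silently relies on the same convention (``the rest of the game remains unchanged'').

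A smaller point: simulating the QRO costs time on every query, so $B_2$'s overhead is multiplicative rather than an additive $\poly(\secp)$. Your ``slight slack in $\delta$'' therefore has to be an explicit gap between $A$'s running time and $\Delta$, which the definition (any $\delta(\Delta)<\Delta$) does not provide.
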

\begin{proof}
Let $A$ be a QPT adversary that succeeds in distinguishing between $H_1$ and $H_2$ with non-negligible advantage. We  show that there exists an efficient attacker $B = (B_1,B_2)$ that acts as the challenger for $A$ and breaks the security of the underlying TLP scheme according to Definition~\ref{def:tlpsecurity} as follows. The attacker $B_1$ starts by running each step of $\VCSetup(1^\secp,(C,x))$, except in Step 2 it will first receive $tpk$ from the TLP challenger, and in Step 5 it will ask for the challenge puzzle by submitting $(m_0,m_1)$ as its chosen messages where $m_0 = (sk,r)$ contains the secret key from Step 3 and $m_1 = sk^*$ is some uniformly random string of size $|sk|+|r|$. Once it receives back the challenge puzzle $o_b \gets \TLPGen(m_b,tpk,tsk)$, it runs $A$ on $crs_b = (tpk,pk,o_b,d)$, which  generates a proof $\pi$. 

Similarly, $B_2$ will execute $A(crs_b,C,x)$. We note, since the running time of $A$ is $\delta(\Delta) < \Delta$, $B_2$ will also have the same running time. Furthermore, if the TLP challenger chooses to generate a puzzle for $m_0 = (sk,r)$ then $crs_b$ will have a puzzle of $sk$ and the view of $A$ will be that in $H_1$. On the other hand, if the TLP challenger chose to generate a puzzle for $m_1 = sk^*$ then the $crs_b$ will have a puzzle of $sk^*$ and the view of $A$ will be that in $H_2$. The rest of the game remains unchanged between the two hybrids. Since $B$ simulates $A$'s challenge almost perfectly (with some negligible loss when $sk^* = (sk,r)$), the advantage of $B$ is equal to that of $A$. Thus, as long as the TLP is secure, the adversary's advantage of distinguishing between $H_1$ and $H_2$ is negligible.
\renewcommand{\qedsymbol}{} 
\qedhere$\blacksquare$
\end{proof}

\begin{claim} 
\label{claim:h2}
Assuming that the commitment scheme is statistically hiding against quantum adversaries, it holds that $H_2 \statind H_3$.
\end{claim}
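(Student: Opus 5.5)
The plan is to reduce directly to the statistical hiding property of the commitment scheme (Definition~\ref{def:commitment}). The two hybrids differ in exactly one place: in Step 4 of the Setup phase, $H_2$ publishes $d \gets \comcom(sk,r)$, while $H_3$ publishes $d^* \gets \Commit(sk^*)$ with fresh randomness. In both hybrids the puzzle $o^*$ already encodes the independent random string $sk^*$ (because of the change made in $H_2$). Consequently $sk$ enters the CRS only through $pk$ and the commitment, so the joint distribution of $crs$ in $H_2$ versus $H_3$ is $(tpk,pk,o^*,\comcom(sk;r),\Delta)$ versus $(tpk,pk,o^*,\comcom(sk^*;r'),\Delta)$.

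First I fix everything except the commitment: the TLP keys, the CVQC pair $(pk,sk)$ output by $V_1$, and $sk^*$. Statistical hiding says that for every pair of messages, and in particular for $m_0=sk$ and $m_1=sk^*$, the commitment distributions are within negligible statistical distance. This holds uniformly over the messages, so conditioning on the fixed values is harmless. Averaging over the fixed values then shows that the two CRS distributions are statistically close.

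Next I apply the data processing inequality. The rest of the experiment $\textbf{Exp}_A^{VC}$ is a single (possibly unbounded, quantum) randomized procedure applied to the CRS together with the independently sampled remaining randomness. This procedure consists of running $A^{\mathcal B}$ with QROM access, then $\VCReveal$, and then the two calls to $\VCVerify$, which in $H_1$ onward use the true $sk$ for $V_{out}$. Applying such a procedure cannot increase statistical distance, so the output bits satisfy $|\text{Adv}^{VC}_{H_2}(A)-\text{Adv}^{VC}_{H_3}(A)|\le\negl(\secp)$, and indeed $H_2\statind H_3$ as whole experiments. Because the hiding is statistical, no running-time bound on $A$ is needed. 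That is why the claim can be stated statistically even though $A$ runs in time up to $\Delta$.

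The main obstacle is the joint dependence on $sk$. The experiment's final output and the $\comver$ check in $\VCVerify$ both use $(sk,r)$, so a naive reduction to a hiding game, where the reduction never sees the commitment randomness, fails. I would handle this by not using an adversarial reduction at all. Instead I will argue purely distributionally: the challenger knows $sk$, $sk^*$ and all randomness, and only the distribution of the public string $d$ changes. I also need to make sure the opening $r$ is not leaked elsewhere, since in $H_2$ the puzzle no longer contains $r$. I would explicitly note this, because otherwise hiding would not apply. Since $\VCReveal$ solves $o^*$, it now returns $sk^*$, which is consistent with the commitment $d^*$ in $H_3$; this consistency issue belongs to the next hybrid step rather than to this claim.
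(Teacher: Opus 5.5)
Your argument is correct, provided $H_3$ is read as you state it (the commitment uses fresh coins $r'$ independent of $o^*$). That is also what the paper's proof effectively uses, since the hiding challenger's coins are never seen by $B$ and so cannot be in the puzzle.

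\textbf{Comparison with the paper.} The paper argues by reduction. $B$ runs Setup as in $H_2$, sends $(m_0,m_1)=((sk,r),sk^*)$ to the hiding challenger, and places the returned $d_b$ in $crs_b=(tpk,pk,o^*,d_b,\Delta)$. It then observes that $A$'s view is that of $H_2$ or $H_3$ according to $b$. You instead condition on all randomness except the commitment coins and apply data processing. These are the same argument, since an unbounded hiding adversary is just a post-processing of $d$. Your version does make explicit that the whole experiment is such a post-processing, including the $\Delta$-time $\VCReveal$ and the use of the true $sk$ in $V_{out}$. That is exactly why statistical hiding is the right assumption; the paper speaks only of $A$'s view.

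\textbf{Corrections.}

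First, your stated reason for avoiding a reduction is mistaken. In $H_2$ the coins $r$ occur only inside $d$. $\comver$ is applied only to $d$ and to openings supplied by $A$ or obtained by solving $o^*$, and the output predicate uses only the solved value and $y$. So the paper's $B$ can compute the experiment's outcome from $d_b$ alone, which is the same fact your data-processing step relies on.

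Second, your closing remark that the solved $sk^*$ is ``consistent with'' $d^*$ is false under your own reading, and it has to be. If $o^*$ contained a valid opening of $d^*$, an unbounded distinguisher could solve $o^*$ and test whether the result opens $d$. That test always succeeds in such an $H_3$ and, by binding, essentially never in $H_2$, so the statistical claim would fail.

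Third, as literally written, $m_0=sk$ and $m_1=sk^*$ have different lengths ($|sk^*|=|sk|+|r|$), which Definition~\ref{def:commitment} excludes. In $H_3$ you should commit to an $|sk|$-bit random string instead. The paper's choice $m_0=(sk,r)$ equalizes lengths but no longer matches Step 4 of Setup, where $r$ is the commitment randomness rather than part of the message.
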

\begin{proof}
Let $A$ be a QPT adversary that can distinguish between $H_2$ and $H_3$ with a non-negligible advantage. Then we can construct an adversary $B$ that acts as the challenger for $A$ and uses it to break the hiding property of the underlying commitment scheme. The attacker $B$ starts by running each step in the Setup exactly as in $H_2$ except now in Step 4 it will first submit $(m_0,m_1)$ to the challenger in the hiding game of the commitment scheme where $m_0 = (sk,r)$ and $m_1 = sk^*$ for some random $sk^* \xleftarrow{\$} \{0,1\}^{|sk|+|r|}$. Once it gets back the challenge commitment $d_b \gets \comcom(m_b)$, it runs $A$ on $crs_b = (tpk,pk,o^*,d_b, \Delta)$ and $(C,x)$. The rest of the steps remain the same. If the challenger chose to create a commitment for $m_0 = (sk,r)$ then the view of $A$ will be that in $H_2$. If a commitment of $m_1 = sk^*$ was generated instead then the view of $A$ will be that in $H_3$. Since $B$ simulates $A$'s challenge almost perfectly (with some negligible loss when $sk^* = (sk,r)$), the advantage of $B$ in breaking the commitment scheme is equal to that of $A$.
\renewcommand{\qedsymbol}{} 
\qedhere$\blacksquare$
\end{proof}


\begin{claim} 
\label{claim:2cvqc-reduction}
If the 2-round privately verifiable CVQC scheme is computationally sound against all malicious QPT provers, $\comcom$ is computationally binding/hiding, and the TLP is secure, then the advantage of $A$ in $H_3$ is negligible.
\end{claim}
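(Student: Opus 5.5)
In $H_3$, the common reference string $crs = (tpk,pk,o^*,d^*,\Delta)$ carries no information about $sk$ beyond $pk$. The plan is a direct reduction: any adversary $A$ winning $\textbf{Exp}_A^{VC}$ in $H_3$ yields a malicious QPT prover $P^*$ against the soundness of the 2-round protocol $\Pi$ (Protocol~\ref{proto:2round-cvqc}, soundness as in Definition~\ref{def:qpip}).

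\textbf{Construction of $P^*$.} $P^*$ takes $(C,x)$ from $A(1^\secp)$ with $C(x)=0$; in the $\QPIP_0$ soundness game the instance is fixed, so we may first average over $A$'s choice. It then receives $pk$ from the external verifier $V_1$. It samples the remaining CRS components itself: $(tpk,tsk)\gets\TLPSetup(1^\secp,\Delta)$, $sk^*\xleftarrow{\$}\{0,1\}^{n}$, $o^*\gets\TLPGen(sk^*,tpk,tsk)$ and $d^*\gets\comcom(sk^*)$. None of these steps requires $sk$, so $P^*$ simulates $A$'s view in $H_3$ perfectly. It also simulates the timestamping oracle $\mathcal{B}$ honestly and relays the quantum random oracle queries of $A$ to its own QROM. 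When $A$ outputs $(\pi_\tau,y)$, $P^*$ strips the timestamp and forwards $\pi$ to $V_{out}$.

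\textbf{Winning events in $H_3$.} Since $H_1$ already replaced Step 5 of the Verify phase by $V_{out}(pk,sk,\pi)$, both $b_1$ and $b_2$ in $H_3$ equal the indicator that $\tau\le\Delta$, the respective opening verifies, and $V_{out}(pk,sk,\pi)=1$. The winning condition is $(b_1\vee b_2)\wedge(C(x)=0 \vee sk\neq y)$.

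\emph{Case $C(x)=0$.} Any win forces $V_{out}(pk,sk,\pi)=1$ on a false instance. This is exactly the event bounded by the soundness $s(\secp)=\negl(\secp)$ of $\Pi$, so this branch contributes at most $s(\secp)$.

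\emph{Case $C(x)=1$ and $sk\neq y$.} Here we must argue that $A$ cannot produce a $y$ that opens $d$ to a value other than $sk$ while also passing. In $H_3$ the true opening of $d^*$ is $sk^*$, which is unrelated to $sk$. I would handle this branch by the binding argument of Claim~\ref{claim:h0}, which the hypothesis on $\comcom$ permits: any $y$ that passes $\comver$ must equal the committed value except with negligible probability. I would also note that the definition's intent is the $C(x)=0$ branch. This case is the main obstacle, because the experiment's accounting (comparing $y$ against $sk$ rather than the committed value) is somewhat ill-defined after the hybrid changes. I would resolve it by restricting to $C(x)=0$, or by arguing via Claims~\ref{claim:h1} and~\ref{claim:h2} that the probability of this event is preserved across hybrids and then bounding it in $H_0$ by binding.

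\textbf{Timing.} $P^*$ runs in the same time as $A$, plus polynomial setup overhead, so it is QPT. The time bound $\delta(\Delta)<\Delta$ matters only in Claim~\ref{claim:h1}, not here.

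\textbf{Conclusion.} Combining the two cases gives $\text{Adv}^{VC}_{H_3}(A)\le s(\secp)+\negl(\secp)$. Chaining this with Claims~\ref{claim:h0}--\ref{claim:h2} then yields $\Delta$-soundness.
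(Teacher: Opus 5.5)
Your handling of the case $C(x)=0$ is correct, and it is tighter than the paper's argument. Because $H_1$ hard-wires $V_{out}(pk,sk,\pi)$ with the true key into both verifications, $b_1\vee b_2$ already implies $V_{out}(pk,sk,\pi)=1$. So the prover $P^*$ that merely forwards $\pi$ breaks soundness of $\Pi$ whenever $A$ wins on a false instance. The paper instead splits on $b_1$ versus $b_2$, spends binding, hiding and TLP security on the $b_1$ branch, and has its reduction evaluate $\VCVerify$ without knowing $sk$.

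The genuine gap is your second case, $C(x)=1$ and $sk\neq y$. Binding only constrains wins through $b_1$, where $y$ itself must pass $\comver$. But $b_2$ is evaluated on the revealed opening, which passes $\comver$ by construction (in $H_3$ it is the true opening of $d^*$, as you say), followed by $V_{out}(pk,sk,\pi)$. It never looks at $y$. Consider an adversary that picks a true instance, outputs the honest proof $P(pk,(C,x))$ timestamped before $\Delta$, and outputs any $y\neq sk$ (e.g.\ the empty string). It gets $b_2=1$ and wins with probability at least the completeness of $\Pi$. The same attack wins in $H_0$, so pushing the event back to $H_0$ and bounding it by binding cannot work. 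Restricting to $C(x)=0$ proves a different statement. With the winning predicate as written, the claim is false. The paper's proof has the same hole: its $b_2$ branch asserts that $b_2=1$ forces $y=sk^*$ and then uses $C(x)=0$.

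The missing idea is to repair the experiment so that the disjunct $sk\neq y$ is attached only to $b_1$. For example, output $\bigl(b_1\wedge(C(x)=0\vee sk'\neq sk)\bigr)\vee\bigl(b_2\wedge C(x)=0\bigr)$ with $y=(sk',r')$. Under that predicate your argument goes through and simplifies. Every win with $C(x)=0$ is covered by your reduction to the soundness of $\Pi$. The only remaining event is $b_1=1$ with $sk'\neq sk$, which means $y$ opens $d^*$ to a message other than the committed one. A reduction that runs $H_3$ itself knows the committed opening, so it can output two valid openings with distinct messages and break binding. Hiding and TLP security are then not needed for this claim at all.
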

\begin{proof}
Let $A$ be a QPT adversary that can break soundness in $H_3$. We then show how to construct an adversary $B$ that will act as $A$'s challenger and use it to break the soundness of the underlying privately verifiable CVQC scheme. $B$ starts by requesting $pk$ from its challenger ($sk$ will be kept secret from $B$), creates a random puzzle $o^*$ and commitment $d^*$ of a random string, then sends $crs = (tpk,pk,o^*,d^*,\Delta)$ to $A$. Once $A$ sends its proof $\pi_\tau$ and decommitment $y$, $B$ will run $b_1 \gets \VCVerify(crs,C,x,\pi_\tau,y)$ and, after it reveals $(sk^*,r)$, also runs $b_2 \gets \VCVerify(crs,C,x,\pi_\tau,(sk^*,r))$. Since $B$ exactly simulates the view of $A$ in this game, we have that $\Pr[\text{B wins}] = \text{Adv}^{VC}_{H_3}(A) \leq \Pr[b_1 \vee b_2 | C(x) = 0 \vee y \neq sk^*] \leq \Pr[b_1 | C(x)=0 \vee y \neq sk^*] + \Pr[b_2|C(x)=0 \vee y \neq sk^*]$.

We note that if $b_1 = 1$ while $C(x) = 0$ and $y \neq sk^*$ then this implies that $A$ found $sk'\neq sk^*$ which passes $\comver(d^*,sk')$ or directly found $sk^*$ from the commitment $d^*$ or puzzle $o^*$ (before time $\Delta$). However, by the security of the underlying primitives, this only happens with negligible probability. Thus $\Pr[b_1 = 1|C(x)=0 \vee y \neq sk^*] \leq \negl(\secp)$. Observe also that, during the second verification phase (post key release), if $b_2 = 1$ then it must be that $y = sk^*$ yet since $C(x) = 0$ this implies that $\pi_\tau$ has passed as a valid forged proof against the real secret key $sk$, violating the soundness of the underlying CVQC protocol.
%
%
\renewcommand{\qedsymbol}{} 
\qedhere$\blacksquare$
\end{proof}
Putting it all together to prove Theorem~\ref{thm:pv-cvqc}, we get that $H_0 \compind H_1 \compind H_2 \statind H_3$, $\text{Adv}^{H_3}_A \leq \negl(\secp)$. Hence, the advantage of adversary $A$ winning in $H_0$ is negligible.
\end{proof}

\input{sections/table-results}

%% file: sections/figure-detailed-workflow.tex
\begin{figure}[H]
    \centering
    
    %
%
        \includegraphics[scale=0.25]{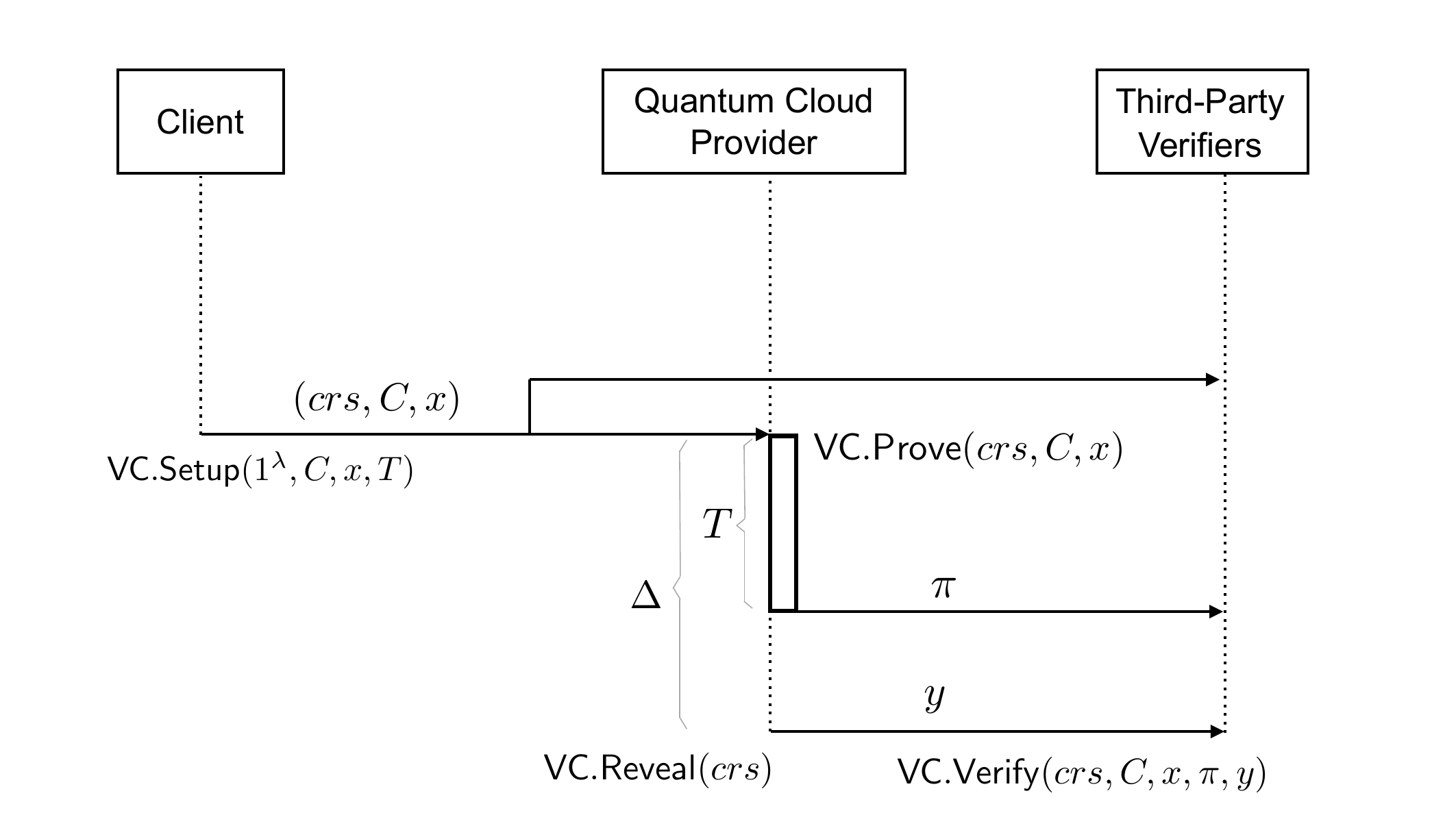}

    \caption{Outline of the post-quantum PV-CVQC protocol. The time to compute the delegated circuit is denoted by $T$. Thus the time to solve the puzzle should be set to some parameter $\Delta > T^{1+\epsilon}$ where $\epsilon > 0$.}
    \label{fig:overview-detailed}
\end{figure}

%% file: sections/table-results.tex
\begin{table*}[!ht]
\centering
\begin{tabular}{|c|c|c||c|c|c|c|c|}
\hline
\multicolumn{3}{|c||}{\textbf{Quantum Circuit}} & \multicolumn{5}{c|}{\textbf{Time-Lock Puzzle (TLP)}} \\
\hline
\textbf{\# Qubits} & \textbf{Depth} & \textbf{T (ms)} & \textbf{$\bm\TLPSolve$ (ms)} & \textbf{$\bm\TLPGen$ (ms)} & \textbf{$\bm\TLPSetup$ (ms)} & \textbf{Amortized $\bm\TLPSetup$ (ms)} & \textbf{$\mu$} \\
\hline
\hline
\multirow{6}{*}{5} & 10 & $1.84$ & $13.63$ & $0.05$ & $1445.57$ & $72.28$ & 1 \\
 & 20 & $2.31$ & $13.63$ & $0.05$ & $1445.57$ & $72.28$ & 1 \\
 & 50 & $4.56$ & $13.63$ & $0.05$ & $1445.57$ & $72.28$ & 1 \\
 & 100 & $8.13$ & $13.63$ & $0.05$ & $1445.57$ & $72.28$ & 1 \\
 & 200 & $16.84$ & $32.97$ & $0.05$ & $6577.50$ & $328.88$ & 3 \\
 & 300 & $22.84$ & $45.83$ & $0.06$ & $10513.51$ & $525.68$ & 4 \\
\hline
\multirow{6}{*}{10} & 10 & $3.14$ & $13.63$ & $0.05$ & $1445.57$ & $72.28$ & 1 \\
 & 20 & $5.34$ & $13.63$ & $0.05$ & $1445.57$ & $72.28$ & 1 \\
 & 50 & $11.78$ & $24.24$ & $0.06$ & $3672.94$ & $183.65$ & 2 \\
 & 100 & $20.47$ & $32.97$ & $0.05$ & $6577.50$ & $328.88$ & 3 \\
 & 200 & $41.06$ & $85.37$ & $0.05$ & $25915.06$ & $1295.75$ & 7 \\
 & 300 & $58.01$ & $117.25$ & $0.05$ & $48829.51$ & $2441.48$ & 10 \\
\hline
\multirow{6}{*}{15} & 10 & $10.52$ & $24.24$ & $0.06$ & $3672.94$ & $183.65$ & 2 \\
 & 20 & $19.39$ & $32.97$ & $0.05$ & $6577.50$ & $328.88$ & 3 \\
 & 50 & $49.51$ & $103.16$ & $0.05$ & $40500.14$ & $2025.01$ & 9 \\
 & 100 & $98.38$ & $196.48$ & $0.07$ & $120891.41$ & $6044.57$ & 17 \\
 & 200 & $168.56$ & $339.30$ & $0.05$ & $52020.00$ & $2601.00$ & 30 \\
 & 300 & $228.81$ & $462.57$ & $0.05$ & $97161.80$ & $4858.09$ & 41 \\
\hline
\end{tabular}%
\vspace{0.3cm}
\caption{The execution time (averaged over 20 trials) $T$ of running a random quantum circuit for different numbers of qubits and circuit depths and the corresponding time elapsed for the time-lock puzzle setup ($\bm\TLPSetup$, total and amortized over the 20 different puzzle generations), generation ($\bm\TLPGen$), and solving ($\bm\TLPSolve$ = $\Delta$). The parameter $\mu$ corresponds to the number of executions of the sequential function to achieve the corresponding $\Delta$ time for any given circuit execution time $T$.}
\label{tab:benchmark_results-}
\end{table*}

%% file: sections/05_experimental.tex
\section{Experimental Results}
\label{sec:experimental}

\input{sections/figure-graph-results}

In this section, we demonstrate the feasibility of our approach by simulating the protocol between a classical client who wishes to delegate quantum circuits of various sizes to a quantum server. Our implementation is a proof of concept with small security parameters, focusing primarily on measuring the time complexity of the parts of the protocol that dominate running time. In particular, we implemented the algorithms of the TLP $(\TLPSetup,\TLPGen,$ $ \TLPSolve)$, and the actual quantum circuit computation  executed by the prover, and which happens to be part of the $\VCProve$ subroutine. Other parts of the protocol such commitments and $\VCVerify$ were not simulated as they are relatively less time-consuming. The verifier's Feynman-Kitaev construction of the Hamiltonian~\cite{fitzsimons2018} is assumed to be part of the preprocessing phase before the protocol starts so we do not consider it in our simulation.

Our implementation includes a variant of the TLP puzzle proposed in \cite{agrawalr2024}, which is based on reusable garbling circuit techniques. While the original scheme uses lattice-based succinct randomized encodings for their reusable garbling mechanism, we instead build upon the implementation of \cite{harth-kitzerow2022} which exhibits weaker security guarantees but is more practical. Additionally, relying on the fact that random oracles are unconditionally sequential~\cite{mahmoody2013a}, we instantiate our sequential function using SHA-256. 

The simulations were performed on a t3.medium AWS instance with 2 vCPUs (3.1 GHz each), 4GB of memory, and access to an SV1 universal state vector simulator to execute quantum circuits (as the prover) on a cloud-based quantum simulation platform, specifically, AWS Braket. We use the Qiskit SDK~\cite{qiskit2024} to define, build, transpile, and run the quantum circuits on the SV1 simulator. Each instance of the simulation is averaged over at least 20 different executions of the quantum circuit. The source code is available at~\cite{QuantumVerificationTLP_Code}. 

\subsection{Random Quantum Circuits}

Table~\ref{tab:benchmark_results-} shows the results of our experiments when tested against randomly generated quantum circuits of 5, 10, and 15 qubits, with a depth ranging between 10 and 300 gates. Random circuits were sampled using Qiskit's \texttt{random\_circuit} function which generates circuits over a random distribution on the standard gates (e.g., Pauli, phase, swap, Hadamard, and controlled gates). For each qubit-depth setting, we simulated 20 different random quantum circuits and measured the time it takes for the prover to execute these circuits $T$. The client can use this to determine the required setting of $\Delta$ for any circuit of depth $D$ such that the time to solve the puzzle is much greater than $T$. In all cases, we observed that, for the verifier, generating a puzzle is almost constant, whereas the bulk of the time was spent in the setup phase and is attributed to the reusable garbling mechanism (which is dependent on the circuit depth).

Across all settings, the prover's circuit execution time 
$T$ increases with both depth and number of qubits (e.g. from 1.84\,ms for 5 qubits at depth 10, up to 228.81\,ms for 15 qubits at depth 300) which directly guides the client's choice of the time-delay parameter for releasing the verification  parameters. In particular, the puzzle-solving $\TLPSolve$ time is tuned to remain larger than the observed computation time. 
On the client side, puzzle generation $\TLPGen$ stays constant and negligible ($\approx$ 0.05--0.07\,ms). The dominant overhead appears in the puzzle setup cost $\TLPSetup$; however, this cost amortizes cleanly across multiple puzzle generations, as reflected by the amortized $\TLPSetup$ column.

\subsection{Enhanced Hybrid HHL}

We also evaluated the protocol over a variant of the Harrow–Hassidim –Lloyd (HHL) circuits \cite{harrow2009,yalovetzky2024}, specifically enhanced hybrid HHL, which are known to take advantage of quantum speedup when solving large (sparse) linear systems and have been found to be particularly effective in finance applications including mean-variance portfolio optimization and risk assessment~\cite{rebentrost2018}. They are thus appropriate candidates for testing as classical algorithms would want to delegate these circuits to quantum-capable, yet potentially untrustworthy servers. We used the enhanced hybrid HHL algorithm (which has both classical and quantum computations) as opposed to the vanilla one since the former is more efficient in practice given the current quantum hardware capabilities.

Figure~\ref{fig:hhl_results} shows the results for different sizes of circuits that correspond to different instances of the matrices which represent the system of linear equations. The depth can be used to estimate the time it takes to solve the circuit and thus determine the appropriate time $\Delta$ for which the puzzle should be solved. The parameter $\mu$ specifies the number of iterations of the sequential function to achieve this $\Delta$. We observe from Figure~\ref{fig:hhl_results}(a) that the circuit depth grows approximately linearly with the size of the problem instance $N$. Figure~\ref{fig:hhl_results}(b) indicates that the measured execution time increases substantially with matrix (instance) size $N$, reaching on the order of minutes for the largest tested instances. These results allow the client to scale the sequential work factor $\mu$ of the time-lock puzzle so that puzzle solving completes (in $\Delta$ time) only after the prover finishes the HHL computation and produces the timestamped proof. Lastly, we see from Figure~\ref{fig:hhl_results}(c) that the average fidelity (which roughly represents the correctness of the quantum circuit upon measurement) of HHL degrades almost linearly with $N$. This is expected, as deeper circuits are prone to decoherence due to noise. While quantum error correction can be used to mitigate this issue, the depth (and hence the running time $T$) would significantly increase and the client should take into account this increase by setting $\mu$ appropriately and get the desired $\Delta > T$.


%% file: sections/figure-graph-results.tex
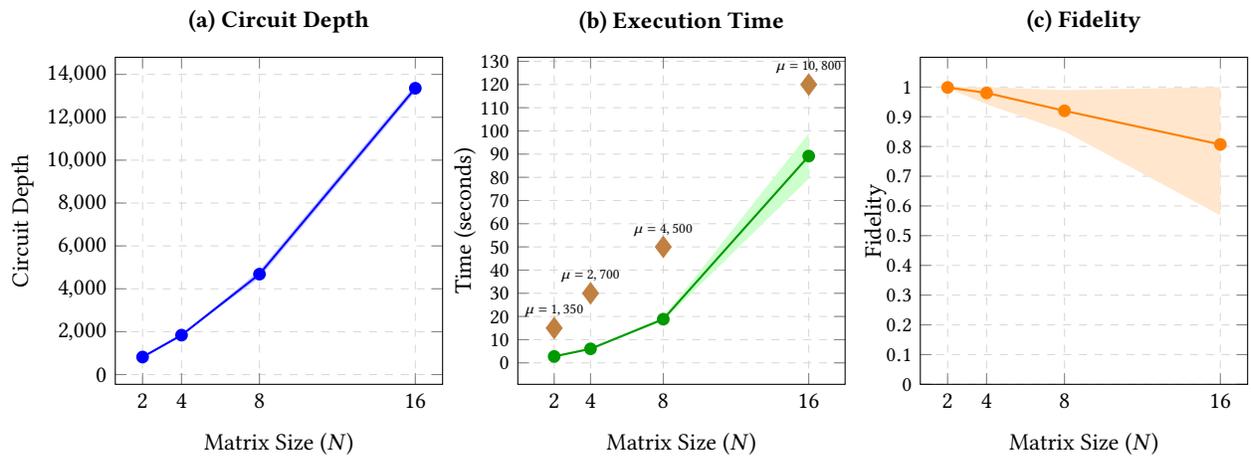
\begin{figure*}[h!]
\centering
\begin{tikzpicture}
\begin{groupplot}[
    group style={
        group size=3 by 1,
        horizontal sep=1cm,
    },
    width=0.7\columnwidth,
    height=0.7\columnwidth,
    grid=major,
    grid style={dashed, gray!30},
]

\nextgroupplot[
    xlabel={Matrix Size ($N$)},
    xtick = {2,4,8,16},
    ylabel={Circuit Depth},
    title={\textbf{(a) Circuit Depth}},
    legend pos=north west,
    ytick={0,2000,4000,6000,8000,10000,12000,14000},
    scaled y ticks=false,
]
\addplot[name path=upper,draw=none] coordinates { (2, 817.9) (4, 1839.4) (8, 4858.5) (16, 13527.0) };
\addplot[name path=lower,draw=none] coordinates { (2, 816.4) (4, 1838.8) (8, 4507.2) (16, 13180.3) };
\addplot[blue!20] fill between[of=upper and lower];
\addplot[blue,thick,mark=*] coordinates { (2, 817.1) (4, 1839.1) (8, 4682.9) (16, 13353.6) };

\nextgroupplot[
    xlabel={Matrix Size ($N$)},
    xtick = {2,4,8,16},
    ylabel={Time (seconds)},
    ylabel style ={yshift=-15pt},
    title={\textbf{(b) Execution Time}},
    legend pos=north west,
    ytick={0,10,20,30,40,50,60,70,80,90,100,110,120,130,140},
    yticklabel style={font=\small, /pgf/number format/.cd, fixed, 1000 sep={}},
    xmin=0, xmax=18,
]
\addplot[name path=upper,draw=none] coordinates { (2, 3.0375) (4, 6.3391) (8, 19.6287) (16, 98.9039) };
\addplot[name path=lower,draw=none] coordinates { (2, 2.4996) (4, 5.7629) (8, 18.0359) (16, 79.4126) };
\addplot[green!20,area legend] fill between[of=upper and lower];
\addplot[green!60!black,thick,mark=*] coordinates { (2, 2.7685) (4, 6.0510) (8, 18.8323) (16, 89.1582) };
\addplot[brown,only marks,mark=diamond*,mark size=4pt,forget plot] coordinates { (2, 15.00) (4, 30.00) (8, 50.00) (16, 120.00) };

\node[font=\tiny, yshift=1pt] at (axis cs:2,15.00) [above] {$\mu=1,350$};
\node[font=\tiny, yshift=1pt] at (axis cs:4,30.00) [above] {$\mu=2,700$};
\node[font=\tiny, yshift=1pt] at (axis cs:8,50.00) [above] {$\mu=4,500$};
\node[font=\tiny, yshift=1pt] at (axis cs:16,120.00) [above] {$\mu=10,800$};

\nextgroupplot[
    grid=major,
    grid style={dashed, gray!30},
    xlabel={Matrix Size ($N$)},
    xtick = {2,4,8,16},
    ylabel={Fidelity},
    ylabel style ={yshift=-18pt},
    title={\textbf{(c) Fidelity}},
    legend pos=north east,
    ymin=0,
    ytick={0,0.1,0.2,0.3,0.4,0.5,0.6,0.7,0.8,0.9,1},
    yticklabel style={font=\small},
]
\addplot[
    name path=upper,
    draw=none,
] coordinates {
    (2, 1.0000) (4, 1.0000) (8, 0.9895) (16, 1.0000)
};

\addplot[
    name path=lower,
    draw=none,
] coordinates {
    (2, 0.9962) (4, 0.9420) (8, 0.8506) (16, 0.5683)
};

\addplot[orange!20] fill between[of=upper and lower];

\addplot[
    orange,
    thick,
    mark=*,
] coordinates {
    (2, 0.9985) (4, 0.9801) (8, 0.9200) (16, 0.8068)
};

\end{groupplot}
\end{tikzpicture}

\caption{(a) Circuit depth scaling with matrix size, showing a linear relationship between the two. (b) The average running time of the HHL circuit for different matrix dimensions $N$ and the corresponding $\mu$ for the time-lock puzzle that ensures it is solved only after the HHL circuit is complete. (c) The average fidelity (how well it adheres to the true solution) for each matrix size ($N$)}
\label{fig:hhl_results}
\end{figure*}

%% file: sections/06_conclusion.tex
\section{Conclusion and Future Work}

This work proposes a verifiable computation scheme for quantum circuits which can be publicly verified by any third party after a set amount of time has elapsed. We demonstrate how a privately and classically verifiable quantum computation can be transformed into a publicly verifiable one by mainly relying on the recently introduced post-quantum time-lock puzzle \cite{agrawalr2024}.  Unlike previous work (that relied on iO), our scheme can be solely based on the LWE assumption, which is falsifiable and thus arguably more dependable than ones that are not falsifiable. 
%

It remains open to find an approach that achieves (non-time-based) public verifiability in the plain model and under standard assumptions. It would also be useful to have a publicly verifiable scheme that is applicable in the more common setting where the verification key is made available as soon as the circuit is delegated. Furthermore, the CRS in our scheme is not reusable (it can only be used to prove a single statement), regardless of whether the setup is instance-dependent or not. This is due to the fact that security (verifiability) is only guaranteed as long as the time-lock puzzle, which hides the secret verification key, is not solved. Therefore, it is an interesting open question to upgrade this scheme into one with a universal CRS that can be used to prove the correctness of multiple computations at once without increasing its size. Lastly, while verifiable computation schemes are generally concerned with malicious provers, extending security against malicious verifiers, who may attempt to falsely accuse the prover of providing an incorrect proof, is a promising future direction.



%% file: sections/08_acknowledgments.tex
\section*{Acknowledgments}
This work was funded by Kuwait Foundation for the Advancement of Sciences (KFAS) under project code: PA24-6TE-2493.

%% file: sections/Appndx_commitment_def.tex
\section{Formal Definition of Commitment Scheme}\label{sec::Formal-Definition-of-Commitment Scheme}

In this section, we present formal definition of a commitment scheme. 

\begin{definition} 
\label{def:commitment}
A commitment scheme consists of two PPT algorithms $(\comcom,\comver)$ with the following properties:
\begin{itemize}
    \item Perfect completeness: For any $m$ and $r$:
    \begin{align*}
        \comver(\comcom(m,r),(m,r)) = 1
    \end{align*}    
    \item Computationally binding: For any QPT adversary $A$ that outputs $(m_0,r_0,m_1,$ $r_1,c)$ such that $m_0 \neq m_1$ and $|m_0| = |m_1|$ the following holds for sufficiently large $\secp$:
    \begin{equation*}
        \Pr\left[
            \begin{array}{c}
            \forall b \in \{0,1\}, \comver(c, (m_b,r_b)) = 1  
            \end{array}
            \right]
            \leq \negl(\secp)
    \end{equation*}
    \item Statistically hiding: For any (potentially unbounded) adversary $A$ the following holds for sufficiently large $\secp$:
        \begin{align*}
            \Pr\left[
            \begin{array}{c}
            m_0 \neq m_1 \\
            |m_0| = |m_1| \\
            b' = b
            \end{array}
            \middle| 
            \begin{array}{c}
            (m_0,m_1,st) \gets A(1^\secp) \\
            b \xleftarrow{\$} \{0,1\} \\
            c \gets \comcom(m_b,r) \\
            b' \gets A(st, c)
            \end{array}
            \right]
            \leq \dfrac{1}{2} + \negl(\secp)
        \end{align*}
\end{itemize}
\end{definition}